\documentclass[]{lipics-v2021}
\usepackage[utf8]{inputenc}
\usepackage{color} 

\hideLIPIcs \nolinenumbers %uncomment to remove references to LIPIcs series 

\usepackage{hyperref}
\usepackage{graphicx}

\newcommand{\journal}[1]{#1} % Show journal
\newcommand{\conf}[1]{}      % Hide conference

\newcommand{\RE}{\mathbb{R}}            % real space
\newcommand{\TT}{\mathcal{T}}
\newcommand{\IT}{\mathbf{T}}
\newcommand{\II}{\mathcal{I}}
\newcommand{\UP}{\blacktriangledown}
\newcommand{\DW}{\blacktriangle}
\newcommand{\extraUP}{\textcolor{black!35}{\blacktriangledown}}
\newcommand{\extraDW}{\textcolor{black!35}{\blacktriangle}}

\title{Shadoks Approach to Parallel Reconfiguration of Triangulations}

\ccsdesc{Theory of computation~Computational geometry}
\keywords{Exact algorithm, SAT, MaxSAT, heuristic, computational geometry}
\supplementdetails[subcategory={Source Code}]{}{https://github.com/gfonsecabr/shadoks-CGSHOP2026}
\relatedversion{\href{https://arxiv.org/abs/}{arxiv.org/abs/}}

 	\author{Guilherme D. da Fonseca}
 	{LIS, Aix-Marseille Université}
 	{guilherme.fonseca@lis-lab.fr}
 	{https://orcid.org/0000-0002-9807-028X}
 	{}
 	
 	\author{Fabien Feschet}
 	{LIMOS, Université Clermont Auvergne}
 	{fabien.feschet@uca.fr}
 	{https://orcid.org/0000-0001-5178-0842}
 	{}

 	\author{Yan Gerard}
 	{LIMOS, Université Clermont Auvergne}
 	{yan.gerard@uca.fr}
 	{https://orcid.org/0000-0002-2664-0650}
 	{}

 	\authorrunning{G. D. da Fonseca, F. Feschet, and Y. Gerard}

\EventEditors{Erin W. Chambers and Joachim Gudmundsson}
\EventNoEds{2}
\EventLongTitle{39th International Symposium on Computational Geometry (SoCG 2023)}
\EventShortTitle{SoCG 2023}
\EventAcronym{SoCG}
\EventYear{2023}
\EventDate{June 12--15, 2023}
\EventLocation{Dallas, Texas, USA}
\EventLogo{socg-logo.pdf}
\SeriesVolume{258}
\ArticleNo{67}
\category{CG Challenge}

\Copyright{Guilherme D. da Fonseca, Fabien Feschet, and Yan Gerard}

\acknowledgements{We would like to thank the Challenge organizers and other competitors for their time, feedback, and making this whole event possible. We would like to thank Hélène Toussaint, Raphaël Amato, Boris Lonjon, and William Guyot-Lénat from LIMOS, as well as the Qarma and TALEP teams and Manuel Bertrand from LIS, who continue to make the computational resources of the LIMOS and LIS clusters available to our research. We would also like to thank Aldo Gonzalez-Lorenzo for the very useful discussion on SAT models.}

\begin{document}

\maketitle

\begin{abstract}
We describe the methods used by Team Shadoks to win the CG:SHOP 2026 Challenge on parallel reconfiguration of planar triangulations.
\journal{An instance is a collection of triangulations of a common point set. We must select a center triangulation and find short parallel-flip paths from each input triangulation to the center, minimizing the sum of path lengths.}
Our approach combines exact methods based on SAT with several greedy heuristics, and also makes use of SAT and MaxSAT for solution improvement.
\journal{We present a SAT encoding for bounded-length paths and a global formulation for fixed path-length vectors. We discuss how these components interact in practice and summarize the performance of our solvers on the benchmark instances.}
\end{abstract}

%--------------------------------------------------------------------
\section{Introduction} \label{s:intro}
%--------------------------------------------------------------------

\begin{figure}[b]
 \centering
 \includegraphics[angle=90,height=4cm]{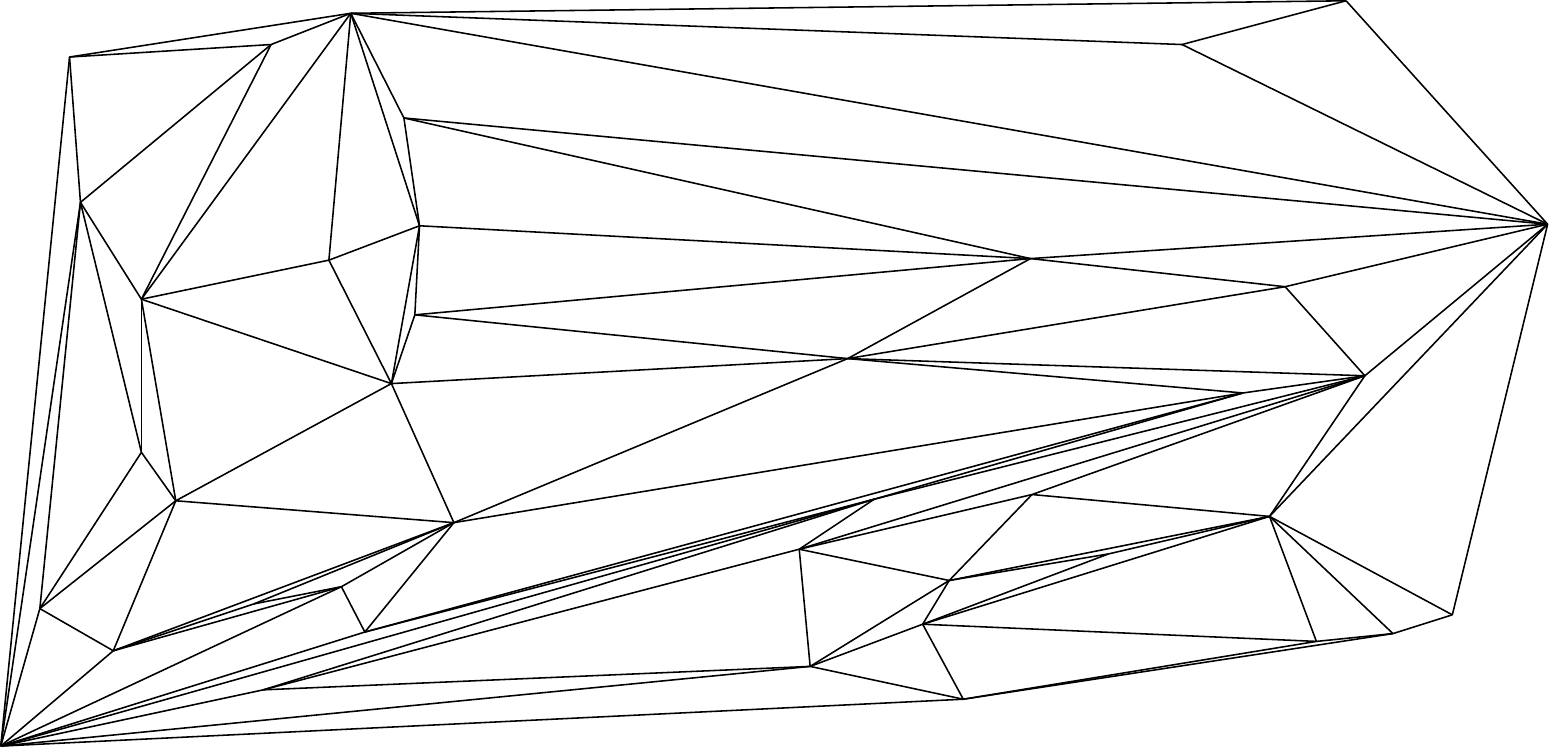}\hfill
 \includegraphics[height=4cm]{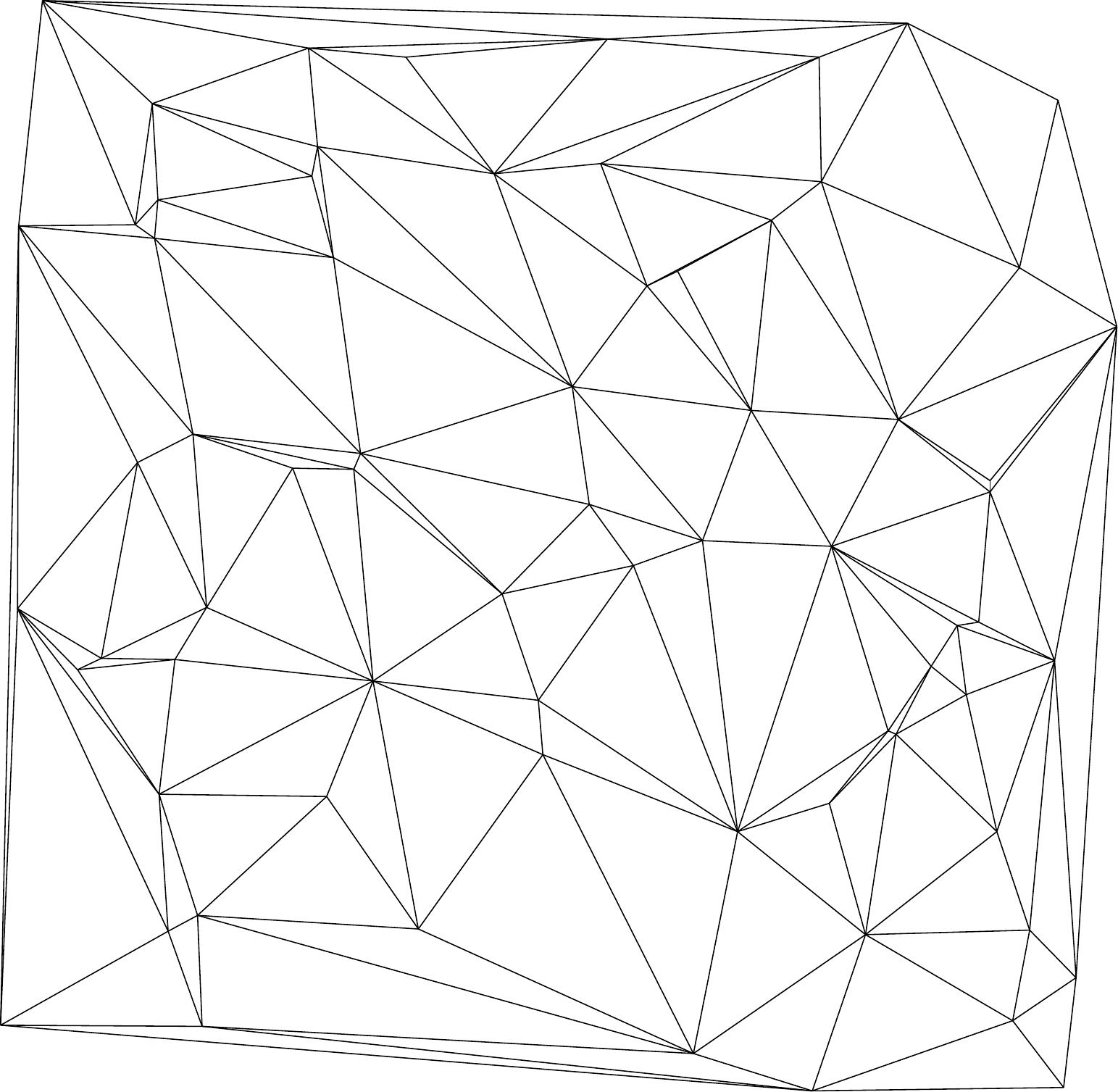}\hfill
 \includegraphics[angle=90,height=4cm]{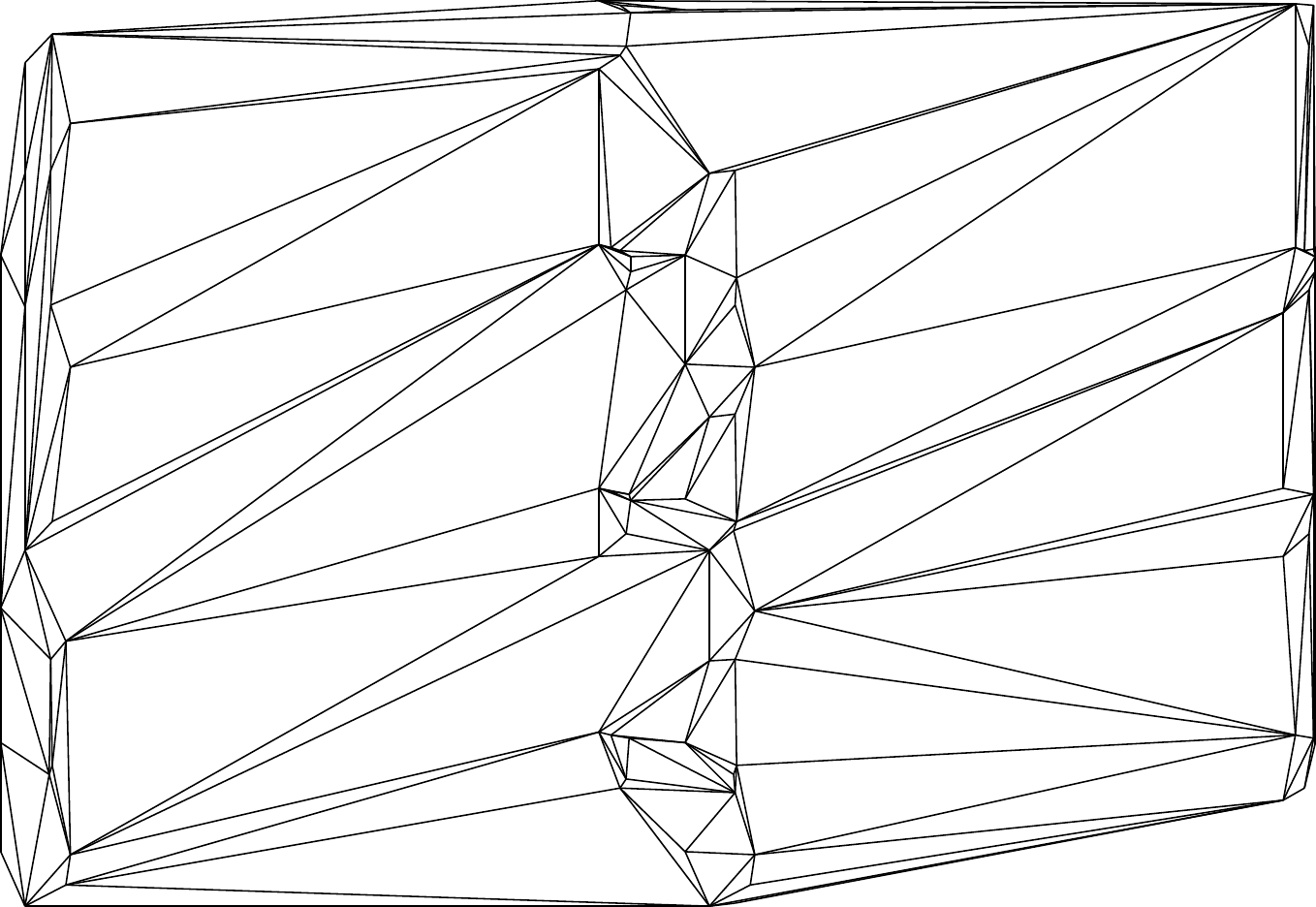}\hfill
 \includegraphics[height=4cm]{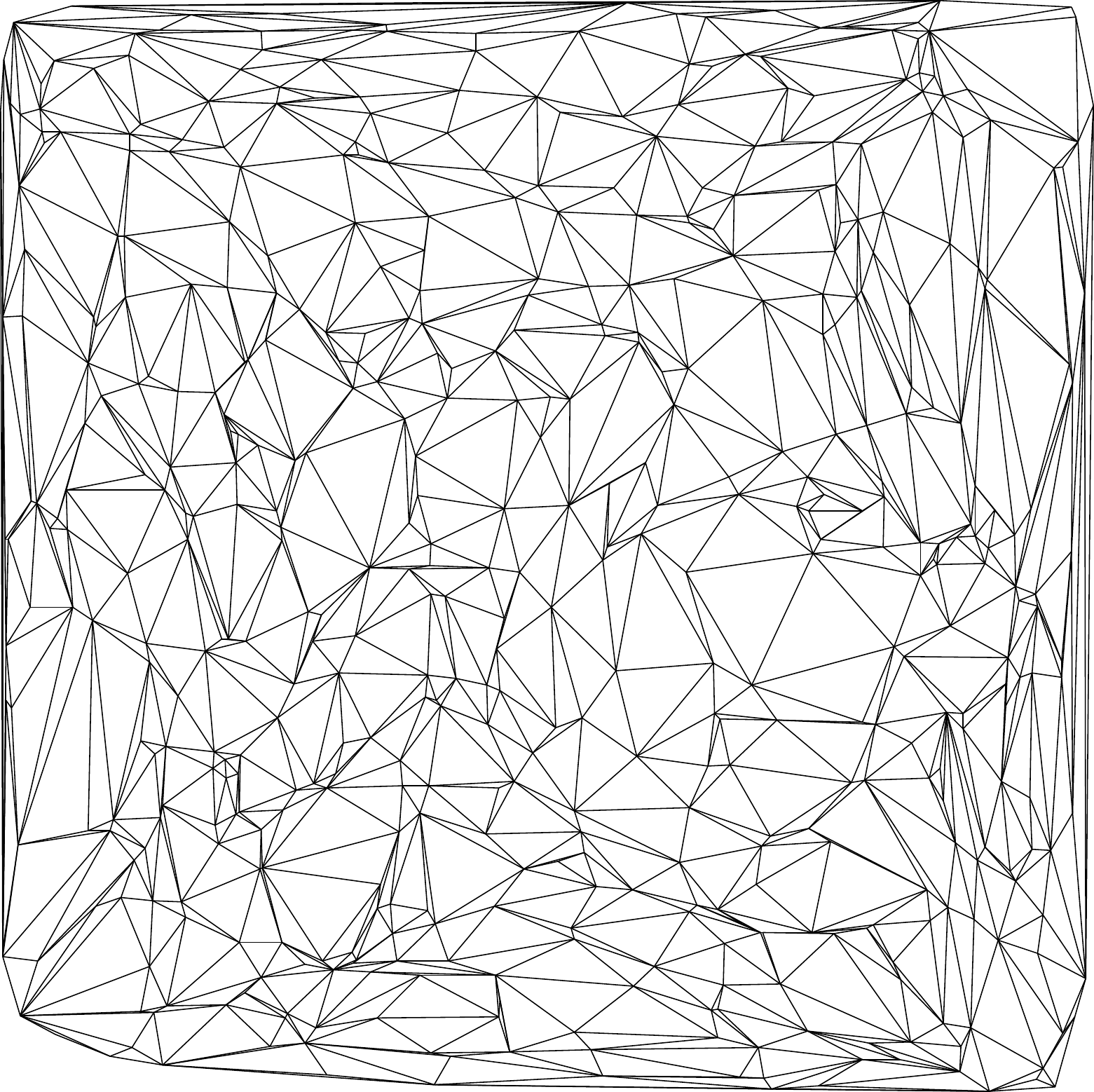}
 \caption{Our best centers to instances \texttt{random\_78\_40\_10}, \texttt{woc-70-random-9a7d18d3}, \texttt{woc-90-tsplib}, and \texttt{rirs-500-50-23d00ec5}, respectively.}
 \label{f:centers}
\end{figure}

The CG:SHOP Challenge is an annual competition in geometric optimization. In 2026, the challenge focuses on a reconfiguration problem between planar triangulations. Our team, called \emph{Shadoks}, won first place with the best solution (among the 28 participating teams) to $249$ instances out of $250$ instances and provably optimal solutions to $189$ instances.

In this paper, we outline the exact methods \journal{and heuristics} that we employed. \conf{The heuristic solvers are only presented in the full version due to space limitations.} We start with some definitions that allow us to describe the problem. Throughout, we consider triangulations of a common point set $S \subset \RE^2$.

Given a triangulation $T$, a \emph{unit flip} is the operation that removes an edge $e \in T$ and adds an edge $e'$.
The unit flip of the edge $e=uv$ considers the empty convex quadrilateral $u,u_2,v,v_2$ and replaces its diagonal $uv$ it by the other diagonal $u_2v_2$ (Figure~\ref{f:flipvar}).
% Notice that the edge $e$ must cross $e'$ which means that the quadrilateral $u,u_2,v,v_2$  must be convex. 

Similarly, a \emph{parallel flip} removes a set of edges $E \subset T$ and adds a set of edges $E'$, in a way that $T' = T \setminus E \cup E'$ is a triangulation, with the condition that no two edges of $E$ are in the same triangle in $T$.
A \emph{path} of \emph{length} $\ell$ is a sequence of triangulations $T_0,\ldots,T_\ell$ such that for all $i$, the triangulation $T_{i+1}$ is obtained from $T_i$ by performing a parallel flip.

An \emph{instance} is a set $S \subset \RE^2$ of $n$ points and a set $\TT = \IT_1,\ldots,\IT_{|\TT|}$ of triangulations of $S$, called \emph{input triangulations}. A \emph{solution} is a set of paths $P_1,\ldots,P_{|\TT|}$ such that $P_i$ starts at $\IT_i$ for all $i$ and all paths end in a common triangulation called \emph{center}.
The goal is to find a solution that minimizes the \emph{objective value} defined as the sum of the lengths of its paths.

During the competition, the organizers provided a total of $250$ instances, with $n$ ranging from $15$ to $12{,}500$ points and $|\TT|$ ranging from $2$ to $200$ triangulations. The $250$ instances are divided into three classes: $100$ \texttt{random} instances, $101$ \texttt{woc} instances, and $49$ \texttt{rirs} instances. The former two instances have up to $320$ points and $2$ to $20$ input triangulations (hence, we call them \texttt{small} instances), while the latter have $500$ to $12500$ points and $20$ to $200$ input triangulations.
The centers of some of our best solutions are presented in Figure~\ref{f:centers}. Additional details about the challenge can be found in the organizers' survey paper~\cite{survey}. 

Our best solvers heavily rely on the SAT solver \texttt{CaDiCal}~\cite{cadical} and the MaxSAT solver \texttt{EvalMaxSAT}~\cite{evalmaxsat}. 
\journal{Nevertheless, we also developed heuristics that do not rely on any external solver, which are important to find initial solutions to large instances, which are then improved by roughly $10\%$ using SAT and MaxSAT solvers.}
\conf{Nevertheless, we also developed heuristics that do not rely on any external solver, which are important to find initial solutions to large instances, which are then improved by roughly $10\%$ using SAT and MaxSAT solvers., which are presented in the full version.} 

Other team strategies include SAT solvers for small instances~\cite{battini2026eth}, heuristics~\cite{lee2026cghunters}, and local search and simulated annealing~\cite{conradi2026engineering} to improve solutions.

\journal{
We describe our exact algorithms in Section~\ref{s:exact}, the heuristics in Section~\ref{s:heuristics}, and discuss the results we obtained in Section~\ref{s:results}. Concluding remarks and open problems are presented in Section~\ref{s:conclusion}.
}

%--------------------------------------------------------------------
\section{Exact Algorithms} \label{s:exact}
%--------------------------------------------------------------------

\journal{
This section describes all elements of our exact solver, many of which are also used in the heuristic solvers. We first show how to use a SAT solver to compute shortest paths between two triangulations (Section~\ref{s:satDistance}). We then show how to extend this result to test if a solution with a list of path lengths exists (Section~\ref{s:satSolution}). We show how to obtain lower bounds in Section~\ref{s:lowerbound} and put the previous elements together to describe our exact solver in Section~\ref{s:exactSolver}.
}

%....................................................................
\subsection{Path SAT Formulation} \label{s:satDistance}

Next, we describe a SAT formulation for the following decision problem. The input is a set $S$ of $n$ points, an integer $\ell$ and two triangulations $T_0,T_\ell$. The output is whether there exists a path $T_0,\ldots,T_\ell$ of length $\ell$.

We define two types of variables. For $i = 0,\ldots,\ell$ and for $u \neq v \in S$, we define an \emph{edge variable} $e(u,v,i)$. The variable $e(u,v,i)$ represents that the edge $uv$ is in the triangulation $T_i$. There are $O(n^2\ell)$ edge variables. It would be possible to define a SAT formulation using only such variables. However, a SAT formulation that performed much better in our experiments uses a second type of variable.

\begin{figure}[ht]
 \centering
 \includegraphics[scale=.85]{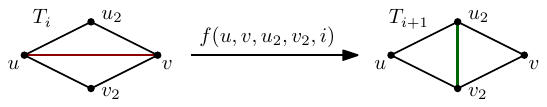}
 \caption{Illustration of a flip and the associated variable $f(u,v,u_2,v_2,i)$.}
 \label{f:flipvar}
\end{figure}

We say that a convex quadrilateral is \emph{empty} if it contains no point of $S$ except for its vertices. For $i = 0,\ldots,\ell$-1 and for an empty convex quadrilateral $u,u_2,v,v_2$, we introduce \emph{flip variables} $f(u,v,u_2,v_2,i)$, which is true if and only if $uv$ is in triangulation $T_i$ and $u_2v_2$ is in triangulation $T_{i+1}$, as shown in Figure~\ref{f:flipvar}.
Notice that if the points are uniformly distributed, then the number of empty convex quadrilaterals is $\Theta(n^2)$~\cite{quadrilateral}, which means that for uniformly distributed points, the number of flip variables is also $O(n^2\ell)$. However, the number of flip variables is $\Theta(n^4\ell)$ if the points are in convex position (which is not the case for the challenge instances).
Next, we describe the different types of clauses.

\subparagraph{Start and target.} For every edge variable $e(u,v,0)$, we have the clause $e(u,v,0)$ if $uv \in T_0$ and $\neg  e(u,v,0)$ if $uv \notin T_0$.
For every edge variable $e(u,v,\ell)$, we have the clause $e(u,v,\ell)$ if $uv \in T_\ell$ and $\neg  e(u,v,\ell)$ if $uv \notin T_\ell$.

\subparagraph{Flips need edges.} For every flip variable $f(u,v,u_2,v_2,i)$, we have the  $5$ binary CNF clauses translating the implication \[f(u,v,u_2,v_2,i) \implies e(u,v,i) \land e(u,v_2,i) \land e(u,u_2,i) \land e(v,v_2,i) \land e(v,u_2,i).\]

\subparagraph{Flips keep edges.} Similarly, or every flip variable $f(u,v,u_2,v_2,i)$, we have the $5$ binary CNF clauses translating the implication \[f(u,v,u_2,v_2,i) \implies e(u_2,v_2,i+1) \land e(u,v_2,i+1) \land e(u,u_2,i+1) \land e(v,v_2,i+1) \land e(v,u_2,i+1).\] 

\subparagraph{Flips flip edges.} For every flip variable $f(u,v,u_2,v_2,i)$, we have the binary clauses translating \[f(u,v,u_2,v_2,i) \implies \neg e(u,v,i+1).\]

\subparagraph{Edge changes require flips.} The last type of clause is the only one that has more than $2$ variables in CNF form. It states that if the edge variable changes from triangulation $i$ to $i+1$, then there must be a flip. The $\bigvee$ below considers all values of the subscript that define existing flip variables. We have two such clauses for each edge variable:
\[e(u,v,i) \land \neg e(u,v,i+1) \implies \bigvee_{u_2,v_2}f(u,v,u_2,v_2,i) \text { and }\]
\[\neg e(u_2,v_2,i) \land e(u_2,v_2,i+1) \implies \bigvee_{u,v}f(u,v,u_2,v_2,i).\]

\subparagraph{Eliminating variables and clauses.}
The number of variables and clauses grows very fast, even though the number of clauses is linear in the number of variables. Next, we show how to eliminate many variables from the model. All eliminated variables are defined as \emph{false} and the clauses that become tautologies are eliminated. If a CNF clause becomes empty, then the problem is unsatisfiable.

\journal{Notice that we can eliminate $e(u,v,0)$ for $uv \notin T_0$ and $e(u,v,\ell)$ for $uv \notin T_\ell$. This is, however, only a special case of a more general rule.}
The following theorem is easy to prove and implies that $\Omega(\log n)$ parallel flips are sometimes necessary to reconfigure two triangulations of $n$ points, even when the points are in convex position. We say that two segments \emph{cross} if they intersect at a point that is not an endpoint of either segment.

\begin{theorem} \label{t:cross}
Consider two triangulations $T,T'$ of $S$ such that a parallel flip transforms $T$ into $T'$ and a segment $s$ with endpoints in $S$. Let $\chi,\chi'$ respectively denote the number of edges of $T,T'$ crossed by $s$. We then have $\chi' \geq \lfloor \chi / 2 \rfloor$.
\end{theorem}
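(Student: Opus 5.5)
The plan is to show directly that at least $\lfloor \chi/2 \rfloor$ of the edges of $T$ crossed by $s$ are not flipped. Such edges survive into $T'$ and are still crossed by $s$, so they give the bound. No new edge created by the parallel flip has to be counted; we only need a lower bound, so surviving crossed edges suffice. Let $E \subset T$ be the set of flipped edges, so $T' = T \setminus E \cup E'$. Every edge of $T$ crossed by $s$ that lies outside $E$ belongs to $T'$ and is still crossed by $s$. Hence it suffices to prove that $E$ contains at most $\lceil \chi/2 \rceil$ of the $\chi$ crossed edges, since then $\chi' \geq \chi - \lceil \chi/2\rceil = \lfloor \chi/2 \rfloor$.

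Order the crossed edges $e_1,\ldots,e_\chi$ by the position of their crossing point along $s$, going from one endpoint $a$ to the other endpoint $b$. The key geometric step is the following claim: for each $j$, the edges $e_j$ and $e_{j+1}$ lie on a common triangle of $T$. To see this, take the open portion of $s$ strictly between the two crossing points. It meets no edge of $T$, because $e_j$ and $e_{j+1}$ are consecutive among the crossed edges. It also contains no point of $S$, which I argue below. So this portion lies in the interior of a single face of $T$, and that face is a triangle whose boundary contains both $e_j$ and $e_{j+1}$. Note that $e_j \neq e_{j+1}$, because a segment crosses a given edge at most once.

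With the claim in hand, the conclusion is combinatorial. The parallel-flip condition says that no two edges of $E$ lie in a common triangle of $T$. Therefore no two consecutive edges $e_j, e_{j+1}$ are both in $E$. So the crossed edges in $E$ form an independent set in a path on $\chi$ vertices, which has size at most $\lceil \chi/2 \rceil$, and this finishes the argument.

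The main obstacle is the step where the portion of $s$ between consecutive crossings avoids points of $S$. This needs $s$ to contain no point of $S$ other than its endpoints, i.e., a no-three-collinear hypothesis, which holds for the challenge instances and which I would state explicitly. The hypothesis is genuinely needed. Suppose $s$ passes through a vertex $w$, and a piece of $s$ from $w$ to the next vertex $w'$ on $s$ crosses only one edge $uv$. Then the two triangles adjacent to $uv$ are $wuv$ and $w'uv$, and flipping $uv$ yields $ww'$, which lies on $s$ and is not crossed by it. With many such pieces all of these edges can be flipped simultaneously, and the stated bound can fail. So I would first verify the general-position assumption and then run the face-traversal argument above. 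In that argument I would also check carefully that the crossing points are distinct and lie in the interiors of the edges, so that the ordering along $s$ is well defined.
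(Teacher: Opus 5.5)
Your proof is correct and is essentially the paper's proof spelled out: order the crossed edges along $s$, note that consecutive ones share a triangle, conclude that the parallel flip removes at most $\lceil \chi/2\rceil$ of them, and observe that the surviving edges are still crossed in $T'$. Your general-position caveat is a legitimate refinement that the paper's one-line proof leaves implicit; it holds automatically for the segments $uv$ to which the theorem is applied, since those are candidate edges and contain no other point of $S$. One correction to your example: it works only if edges lying along $s$ do not count as crossing, whereas the paper's literal definition of crossing does count them. Flipping the first and third of three crossed edges on each side of an interior vertex $w$ gives $\chi=6$ and $\chi'=2$, a failure that avoids this issue.
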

\journal{
\begin{proof}
Consider the sequence $L$ of edges of $T$ crossed by $s$ in the order they cross the segment $s$. A parallel flip cannot remove two consecutive edges of $L$ because they share a triangle, hence the theorem follows.
\end{proof}
}

Consequently, we only define the variable $e(u,v,i)$ when $uv$ crosses strictly less than $2^i$ edges of $T_0$ and strictly less than $2^{\ell-i}$ target edges.
We only define flip variables when a certain set of edge variables is defined:  $f(u,v,u_2,v_2,i)$ is only defined when $uv, uv_2, uu_2,vv_2, vu_2$, are all defined at $i$ and $u_2v_2, uv_2, uu_2,vv_2, vu_2$, are all defined at $i+1$.

%....................................................................
\subsection{Solution SAT Formulation} \label{s:satSolution}

Next, we describe a SAT formulation for the following decision problem. Recall that an instance is a set $S$ of $n$ points and a list $\TT$ of input triangulations $\IT_1,\ldots,\IT_{|\TT|}$. The input of the decision problem is an instance and $|\TT|$ integers $\ell_1,\ldots,\ell_{|\TT|}$. The output is whether there exists a solution $P_1,\ldots,P_{|\TT|}$ such that path $P_i$ has length $\ell_i$ for all $i$.

We model the $|\TT|$ paths $P_1,\ldots,P_{|\TT|}$ independently as before, starting path $P_i$ at the input triangulation $\IT_i$. The final triangulation of each path is unknown, but the same edge variables are used for the final triangulation of every path, since a valid solution requires that all paths end in the same triangulation. It is easy to see that the SAT formulation is satisfiable if and only if there exists a solution with the given lengths.

\subsection{Lower Bound} \label{s:lowerbound}

In order to obtain an exact solution to an instance $\II$, we start by computing a lower bound to its objective value. We say that the \emph{distance} between two triangulations $T,T'$ is the length of the shortest path from $T$ to $T'$. We create a complete directed graph $G(\II)$ with edge lengths as follows. The vertices are the triangulations $\TT$ and the length of each edge is the distance between the corresponding triangulations. A \emph{cycle packing} of $G$ is a collection of vertex-disjoint directed cycles, {i.e.} a subset of edges such that each vertex has at most one outgoing and at most one incoming edge in the subset. The graph is directed to allow for cycles with only $2$ edges. The \emph{length of a cycle} is the sum of the lengths of its edges, and the \emph{length of a cycle packing} is the sum of the lengths of its cycles. The maximum length cycle packing can be solved in polynomial time using a reduction to maximum weight bipartite matching~\cite{cyclepacking}. \journal{We have the following theorem.} \conf{The following theorem is easy to show.}

\begin{theorem} \label{t:cyclepacking}
Given an instance $\II$, the objective value of a solution is at least the length of any cycle packing of $G(\II)$ divided by $2$.
\end{theorem}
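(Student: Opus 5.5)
Let $P_1,\ldots,P_{|\TT|}$ be any solution with center $C$, and let $\ell_i$ denote the length of $P_i$, so the objective value is $\sum_i \ell_i$. The plan is to charge every edge of a cycle packing to two of these path lengths via the triangle inequality, and to count that each $\ell_i$ is charged at most twice.

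First, I will observe that the distance between triangulations is symmetric, since reversing a parallel flip is again a parallel flip. The removed and added edge sets swap roles. The condition that no two added edges share a triangle of $T'$ should follow because each flipped edge lies in its own empty quadrilateral and these quadrilaterals are interior-disjoint. Given symmetry, the distance satisfies the triangle inequality through the center. For any two input triangulations $\IT_i,\IT_j$, concatenating $P_i$ with the reverse of $P_j$ gives a path from $\IT_i$ to $\IT_j$ of length $\ell_i+\ell_j$. Hence $d(\IT_i,\IT_j)\le \ell_i+\ell_j$.

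Next, fix a cycle packing $\mathcal{C}$ of $G(\II)$ and sum this inequality over its directed edges $(\IT_i,\IT_j)$. The length of $\mathcal{C}$ is then at most $\sum_{(i,j)\in\mathcal{C}} (\ell_i+\ell_j)$. Each vertex has at most one outgoing and at most one incoming edge in $\mathcal{C}$, so each $\ell_i$ appears at most twice in this sum: once as a tail and once as a head. Since all $\ell_i\ge 0$, the sum is at most $2\sum_i \ell_i$. Dividing by $2$ gives the claim.

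The only step that needs care is the symmetry of parallel flips, which I expect to be the main point to verify. Two edges added in the same parallel flip cannot bound a common triangle of $T'$. Each triangle of $T'$ containing an added edge $u_2v_2$ lies inside that edge's quadrilateral $u,u_2,v,v_2$, and two removed edges never share a triangle of $T$. So the quadrilaterals of distinct flipped edges share no triangle, and a triangle of $T'$ inside one quadrilateral cannot contain the added diagonal of another. The remaining steps are routine summation.
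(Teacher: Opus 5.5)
Your proof is correct and follows the paper's route: apply the triangle inequality through the center to each edge of the packing, then note that each input triangulation is charged at most twice (the paper sums cycle by cycle and uses the distances $r_i \le \ell_i$ rather than the path lengths). Your check that reversing a parallel flip gives a parallel flip supplies the symmetry of the distance, which the paper's step $d_{i,i+1} \le r_i + r_{i+1}$ uses without stating it.
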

\journal{
\begin{proof}
Let $d_{i,j}$ denote the distance between the input triangulations $\IT_i,\IT_j$. Let $C$ be a potential center and let $r_i$ be the distance from $C$ to $\IT_i$. Consider a cycle $\IT_1,\ldots,\IT_k$ in the cycle packing. By triangle inequality $d_{i,i+1} \leq r_i+r_{i+1}$ with indices taken modulo $k$ (see Figure~\ref{f:cyclepack} (b)). Summing over the inequalities for $i$ from $1$ to $k$, we have that the length of the cycle is at most $2 \sum_i r_i$. Applying the same argument to every cycle, the theorem follows.
\end{proof}
}

\begin{figure} [ht]
    \centering
    \includegraphics[scale=.85]{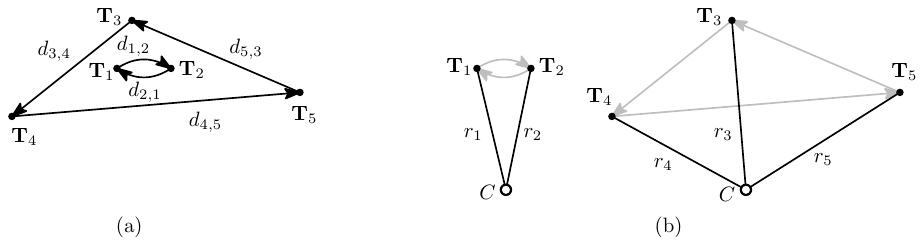}
    \caption{(a) A cycle packing. (b) Illustration of the proof. In this example,$d_{1,2} \leq r_1+r_2$, $d_{2,1} \leq r_2+r_1$, $d_{3,4} \leq r_3+r_4$, $d_{4,5} \leq r_4+r_5$, and $d_{5,3} \leq r_5+r_3$ by triangle inequality.}
    \label{f:cyclepack}
\end{figure}

%....................................................................
\subsection{The Exact Solver} \label{s:exactSolver}

First, we use the exact path formulation from Section~\ref{s:satDistance} to calculate the distance between all $\binom{\TT}{2}$ pairs of input triangulations using a SAT solver (in our case, \texttt{CaDiCal}~\cite{cadical}).
It is easy to formulate the problem of finding a maximum length cycle packing as a weighted MaxSAT problem, which provides a lower bound $b$ to the objective value (see Section~\ref{s:lowerbound}). We solve this problem using a weighted MaxSAT solver (in our case \texttt{EvalMaxSAT}~\cite{evalmaxsat}). We then use backtracking to list all integer solutions to $\ell_0,\ldots,\ell_{|\TT|} = b$ that satisfy $\ell_i+\ell_j \geq \textrm{distance}(T_i,T_j)$. We use the SAT formulation from Section~\ref{s:satSolution} to test the existence of a solution with the given lengths $\ell_0,\ldots,\ell_{|\TT|}$, again using the \texttt{CaDiCal} SAT solver. If a solution is found, then it is optimal. Otherwise, we increment $b$ and repeat. Notice that $b$ is always a lower bound to the objective value. Hence, if a solution obtained by a heuristic attains this lower bound, then it is optimal.

\journal{
%--------------------------------------------------------------------
\section{Heuristics} \label{s:heuristics}
%--------------------------------------------------------------------

In this section, we describe different approaches that can be used when we do not need to guarantee the optimality of the solution. In Section~\ref{s:happy}, we present a conjecture that allows us to significantly increase the performance of the SAT solver. In Section~\ref{s:crossLB}, we show how to further increase the performance of the SAT solver using a heuristic coupled with some instance-independent preprocessed data. In Section~\ref{s:improve}, we show how to use the SAT solver to improve existing solutions. In Sections~\ref{s:pathHeuristic} and~\ref{s:solutionHeuristic}, we respectively show how to compute short paths and good solutions without a SAT solver.
}

%....................................................................
\subsection{Happy Edges Conjecture} \label{s:happy}

The happy edges conjecture~\cite{happy} is a general conjecture that is \emph{false} for some reconfiguration problems and \emph{true} for others.

\begin{conjecture} \label{c:happy}
For any pair of configurations $T,T'$, there exists a shortest path between $T,T'$ where the edges that are common to both $T$ and $T'$ appear in all intermediate configurations.
\end{conjecture}

The conjecture is \emph{false} for triangulations under unit flips and arbitrary points~\cite{Pil14} but \emph{true} when the points are in convex position~\cite{STT86}. Our experiments lead us to believe that the conjecture is \emph{true} for parallel flips, as we could not find a counterexample.

\journal{Enforcing that an edge never disappears and later reappears along a path actually makes the SAT formulation harder to solve. However, there are some implications of the conjecture that are very useful to make the SAT formulation shorter and easier to solve.}

When computing a path of length $\ell$ from $T_0$ to $T_\ell$ using SAT, for every edge $uv$ that appears in both $T_0$ and $T_\ell$, we add clauses $e(u,v,i)$ that force the variable to be true for all $i$. More importantly, we then eliminate every edge variables corresponding to edges that cross $uv$. The same idea can be applied to the SAT formulation that finds a solution, but then only the edges that appear in all input triangulations are forced to be true for all $i$, and again the edges that cross them are eliminated.

Furthermore, when computing a path, for every edge $uv \in T_\ell$, we eliminate flip variables that remove $uv$, {i.e.} $f(u,v,u_2,v_2,i)$ for all $u_2,v_2,i$. Similarly, for every edge $u_2v_2 \in T_0$, we eliminate flip variables that insert $u_2v_2$, that is $f(u,v,u_2,v_2,i)$ for all $u,v,i$.

\journal{
%....................................................................
\subsection{Crossing Lower Bound} \label{s:crossLB}

Let $b(uv,T_0)$ denote the number of parallel flips needed to obtain an edge $uv$ starting at a triangulation $T_0$.
Clearly, when creating the SAT formulation for a path $T_0,T_1,\ldots$ we only need to define the edge variable $e(u,v,i)$ for $i \geq b(T_0,uv)$.
Theorem~\ref{t:cross} implies that if an edge $uv$ crosses $\chi(uv,T_0)$ segments of $T_0$, then $b(uv,T_0) \geq \lceil\log_2(\chi(uv,T_0)+1)\rceil$. This bound is tight when all the edges of $T_0$ that cross $uv$ share a common endpoint and the endpoints are in convex position (Figure~\ref{f:crosslb1}). However, there are different ways in which the edges of $T_0$ may cross $uv$ that may imply higher values of $b(uv,T_0)$.

\begin{figure} [ht]
    \centering
    \includegraphics[scale=.85,page=2]{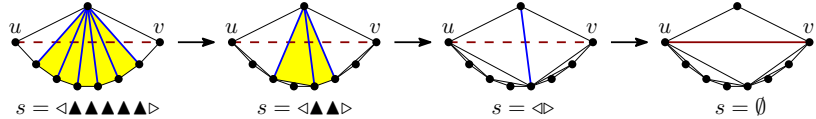}
    \caption{A path of length $3$ to insert an edge that had $6$ crossings.}
    \label{f:crosslb1}
\end{figure}

We consider estimations of $b(uv,T_0)$ based on the sequence of triangles that contain an upper or a lower edge (Figure~\ref{f:crosslb2}). An edge $uv$ that crosses $\chi(uv,T_0)$ segments of $T_0$ is translated into a \emph{string} $s=s(uv,T_0)$ of $\chi(uv,T_0) + 1$ symbols in the alphabet $\{\UP,\DW,\triangleleft,\triangleright\}$, according to which sides of $uv$ contain the edges that are not crossed by $uv$ in each triangle. 

\begin{figure} [ht]
    \centering
    \includegraphics[scale=.85,page=3]{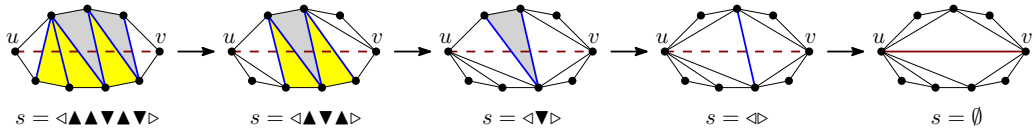}
    \caption{A path of length $4$ to insert an edge that had $6$ crossings. Each triangle is labeled and colored as containing an upper or a lower edge.}
    \label{f:crosslb2}
\end{figure}

The unit flips on $T_0$ have equivalent \emph{replacements} on $s$ that replace the substring on the \emph{left-hand side} by the substring on the \emph{right-hand side}. We define a \emph{substring} as a contiguous subsequence. Flipping the two extreme triangles translates to the \emph{extreme replacements}
$\triangleleft\DW \rightarrow \triangleleft$,
$\triangleleft\UP \rightarrow \triangleleft$,
$\DW\triangleright \rightarrow \triangleright$, and
$\UP\triangleright \rightarrow \triangleright$.
Flipping intermediate triangles with the same orientation translates to
$\DW \DW\rightarrow \DW$ and
$\UP\UP \rightarrow \UP$,
while flipping intermediate triangles of different orientations translates to
$\UP\DW \rightarrow \DW\UP$ and 
$\DW\UP \rightarrow \UP\DW$.
The last flip to insert the edge $uv$ is
$\triangleleft\triangleright \rightarrow \emptyset$.
There are other replacements that may increase the string length such as
$\DW \rightarrow \DW\DW$, which consist of exchanging the left-hand and right-hand side of some aforementioned replacements, but we show that we can always obtain shortest paths without using them.

A parallel flip means that we may apply several replacements simultaneously as long as their left-hand sides correspond to disjoint substrings. We call the application of several replacements a \emph{rewriting}, and use the notation $s \Rightarrow s'$ to say that $s'$ is a rewriting of $s$.
A \emph{rewriting sequence} of length $\ell$ is a sequence of $\ell+1$ strings connected by rewriting operations and ending at the empty string $\emptyset$. We want to find shortest rewriting sequences. 
Let $b(s)$ be the length of the shortest rewriting sequence of the string $s$. 

Notice that if the points are not in convex position, then some replacements may correspond to invalid flips. Hence, assuming convex position provides a lower bound $b(s(uv,T_0)) \leq b(uv,T_0)$. Furthermore, the replacements assume that there is an unbounded number of points, so that new triangles may be created freely.

We say that a string $s$ is a \emph{subword} of a string $s'$ if $s$ may be obtained from $s'$ by removing $\UP,\DW$ symbols without changing the order of the remaining symbols.
We prove the following results.

\begin{theorem}\label{temp0}
    If $s$ is a subword of $s'$, then $b(s) \leq b(s')$. 
\end{theorem}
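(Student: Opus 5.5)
Since subwords are obtained by deleting one $\UP$ or $\DW$ symbol at a time, I will show by induction that whenever $s$ is obtained from $s'$ by deleting a single symbol $x\in\{\UP,\DW\}$, we have $b(s)\le b(s')$. The general statement then follows by chaining these single deletions. The argument is a simulation. Fix a shortest rewriting sequence $s'=s'_0\Rightarrow s'_1\Rightarrow\cdots\Rightarrow s'_\ell=\emptyset$. From it I will build a sequence $s=s_0,s_1,\ldots,s_\ell=\emptyset$ in which each $s_j$ is a subword of $s'_j$ and each step $s_j\Rightarrow s_{j+1}$ is either a rewriting or the identity. Identity steps can simply be deleted, which gives $b(s)\le\ell$.

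To set up the simulation, I will maintain an embedding of $s_j$ into $s'_j$: an order-preserving injection of positions that matches symbols and hits every $\triangleleft$ and $\triangleright$. This is the "tracking" of which symbols of $s'_j$ survive in $s_j$. Given a rewriting $s'_j\Rightarrow s'_{j+1}$, made of disjoint replacements on substrings $w_1,\ldots,w_k$, I will treat each replacement $w\to w'$ separately. Let $w^*$ be the part of $s_j$ embedded into $w$; it is a subword of $w$, and the embedding may skip some symbols inside $w$. Every left-hand side has length at most $2$, so I will go through the cases.

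\begin{itemize}
\item If $w^*=w$, apply the same replacement in $s_j$. The embedded copy of $w$ is contiguous in $s_j$, because no symbol of $s_j$ maps strictly between two adjacent positions of $s'_j$.
\item If $w^*$ is empty, do nothing; the embedding still works, since $w'$ only adds symbols to $s'_{j+1}$.
\item If $w^*$ is a single symbol, check each replacement. For $\DW\DW\to\DW$ and $\UP\UP\to\UP$, map the kept symbol to the result. For $\UP\DW\to\DW\UP$, map the single symbol to its copy in $w'$. For extreme replacements such as $\triangleleft\DW\to\triangleleft$ with $w^*=\triangleleft$, keep it. For $\triangleleft\triangleright\to\emptyset$, $w^*$ cannot be a single symbol, because both brackets must be embedded.
\end{itemize}

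In every case $s_{j+1}$ is a subword of $s'_{j+1}$. The replacements used in $s_j$ act on disjoint substrings, since their preimages $w_i$ were disjoint. So $s_j\Rightarrow s_{j+1}$ is a valid rewriting, possibly with no replacements, in which case it is the identity.

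The main obstacle is making the embedding invariant precise. In particular, applying a replacement in $s_j$ needs $w^*=w$ to be a contiguous substring of $s_j$. If the embedding skipped a symbol between the two positions of $w$, they would not be adjacent in $s'_j$ at all, so this holds. It also needs the final step to be correct: when $s'_\ell=\emptyset$, the only subword is $\emptyset$, because the brackets are never deleted. I would write the invariant as ``$s_j$ is obtained from $s'_j$ by deleting some $\UP,\DW$ symbols'' and check the small table of replacements explicitly, including the length-increasing ones in case the shortest sequence uses them. Those cases are handled in the same way, because a single-symbol left-hand side is either fully embedded or not embedded at all.
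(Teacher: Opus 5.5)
Your proposal is correct and follows essentially the same route as the paper: simulate a shortest rewriting sequence of $s'$, apply to $s_i$ exactly those replacements whose symbols all survive in $s_i$, and check case by case (an extra symbol is removed, a swap moves an extra symbol, a replacement touches only extra symbols) that $s_{i+1}$ remains a subword of $s'_{i+1}$. Three of your additions make the argument slightly more careful than the paper's: the explicit order-preserving embedding, the explicit check of the length-increasing replacements, and the deletion of empty rewriting steps. The paper covers the length-increasing replacements only implicitly, yet they matter because Corollary~\ref{temp1} applies the theorem to sequences that may use them. Your initial reduction to single deletions is harmless but unnecessary, since your embedding already handles arbitrary deletions.
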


\begin{proof}
Let $s$ be a subword of $s'$ and consider a shortest rewriting sequence $s'_0 \Rightarrow \cdots \Rightarrow s'_{b(s')}$ of the string $s'$ (we have $s'_0=s'$ and $s'_{b(s')}=\emptyset$). We show how to build a rewriting sequence $s_0\Rightarrow \cdots\Rightarrow s_{b(s')}$ of $s$ of the same length (we have $s_0=s$ and $s_{b(s')}=\emptyset$).  

We show that $s_i$ is a subword of $s'_i$ for all $i$. 
The proof proceeds by induction. The statement is true for $i=0$, since $s$ is a subword of $s'$.
Assume that $s_i$ is a subword of $s'_i$. We build $s_{i+1}$ as follows: 
the replacements applied to $s'_ {i}$ involving symbols which are all in $s_i$ are applied to $s_i$ to obtain $s_{i+1}$. All the other replacements are not taken into account to obtain $s_{i+1}$. 

We now show that the string $s_{i+1}$ computed this way is a subword of $s'_{i+1}$. There are different types of replacements that we applied to $s'_ i$ but not to $s_i$.
For convenience, the symbols of a word $s_i$ that are not in the subword $s'_i$ are called \emph{extra symbols} and shown as gray symbols $\extraUP,\extraDW$.

\begin{itemize}
    \item Replacements with one extra symbol
    of the form
    $\extraUP \UP \rightarrow \UP$, 
    $\UP\extraUP \rightarrow \UP$,
    $\extraDW\DW \rightarrow \DW$,  
    $\DW \extraDW \rightarrow \DW$,
    $\triangleleft\extraUP \rightarrow \triangleleft$,
    $\triangleleft\extraDW \rightarrow \triangleleft$,
    $\extraUP \triangleright \rightarrow \triangleright$, and
    $\extraDW \triangleright \rightarrow \triangleright$. 
    Regarding the correspondence between $s_{i+1}$ and $s'_{i+1}$, the fact that this replacement is not applied to $s_i$ means that the extra symbol $\extraUP$ or $\extraDW$ has been removed. 
    
    \item Replacements with one extra symbol of the form
    $\extraDW \UP \rightarrow \UP \extraDW$, 
    $\extraUP\DW \rightarrow \DW \extraUP$,
    $\DW\extraUP \rightarrow \extraUP \DW$, and
    $\UP\extraDW\rightarrow\extraDW\UP$.
    Regarding the correspondence between $s_{i+1}$ and $s'_{i+1}$, the extra symbol $\extraDW$ or $\extraUP$ has just changed its position in the string $s_{i+1}$. 
    
    \item Replacements containing only extra-symbols are simply not taken into account.
\end{itemize}

Hence, the theorem follows by induction.
\end{proof}

\begin{corollary}\label{temp1}
For every string $s$, there exists a shortest rewriting sequence from $s$ that uses no replacement with the right-hand side longer than the left-hand side.
\end{corollary}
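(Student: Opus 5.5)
We argue by strong induction on $b(s)$, or equivalently by a surgery on a shortest rewriting sequence that keeps its length. Take a shortest rewriting sequence $s=s_0\Rightarrow s_1\Rightarrow\cdots\Rightarrow s_{b(s)}=\emptyset$. Among all shortest sequences, choose one minimizing the total number of length-increasing replacements. These are replacements such as $\DW\rightarrow\DW\DW$, $\triangleleft\rightarrow\triangleleft\UP$, or $\emptyset\rightarrow\triangleleft\triangleright$. The aim is to show that this number is zero.

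\textbf{Key observation.} A length-increasing replacement is the reverse of a shortening replacement. It therefore turns a substring $x$ into a longer string $y$ obtained by inserting one $\UP$ or $\DW$ symbol. In every case $x$ is a subword of $y$. The one exception would be $\emptyset\rightarrow\triangleleft\triangleright$, but the strings we consider contain exactly one $\triangleleft$ and one $\triangleright$ until the final step, so this reverse replacement cannot occur before the end and is useless at the end. The reverses of the swaps $\UP\DW\leftrightarrow\DW\UP$ are themselves allowed, length-preserving replacements.

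Now let $i$ be the first step at which a length-increasing replacement is used, in the rewriting $s_i\Rightarrow s_{i+1}$. Define $s_{i+1}'$ from $s_i$ by applying only the non-increasing replacements of that rewriting, and skipping the length-increasing ones. The remaining replacements still act on disjoint substrings, so $s_i\Rightarrow s_{i+1}'$ is a legal rewriting. By the key observation, $s_{i+1}'$ is a subword of $s_{i+1}$: each skipped replacement leaves a substring that is a subword of what would have been produced. Theorem~\ref{temp0} then gives $b(s_{i+1}')\le b(s_{i+1})=b(s)-i-1$.

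The main obstacle is that we must not just compare values of $b$. We need an actual sequence from $s_{i+1}'$ with no new increasing steps, or at least with fewer in total. The proof of Theorem~\ref{temp0} constructs its sequence by projection: it applies to the subword only those replacements of the longer sequence whose symbols all survive. Such a projected replacement is the same kind of replacement as the original, so projection never creates a length-increasing replacement that was not already present. Projecting the tail $s_{i+1}\Rightarrow\cdots\Rightarrow\emptyset$ onto $s_{i+1}'$ therefore yields a tail with at most as many increasing replacements as before. Prepending $s_0\Rightarrow\cdots\Rightarrow s_i\Rightarrow s_{i+1}'$ gives a rewriting sequence of length at most $b(s)$, which must equal $b(s)$ because $b(s)$ is minimal. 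This new sequence has strictly fewer length-increasing replacements, since at least the ones at step $i$ were removed. That contradicts the minimal choice and proves the corollary.

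The routine checks are the case list for the key observation and the verification that, under projection, a shortening or swap replacement stays shortening, stays a swap, or disappears.
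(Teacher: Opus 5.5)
Your proof is correct and follows the paper's argument: at the first step that uses a length-increasing replacement you drop those replacements, observe that the result is a subword of $s_{i+1}$, and invoke Theorem~\ref{temp0}. The paper leaves the iteration implicit in ``the corollary follows''; your extremal choice, together with the remark that the projection in the proof of Theorem~\ref{temp0} never creates new length-increasing replacements, makes it explicit. That projection bookkeeping is optional, though: a strong induction on $b$ closes the argument using only the inequality, by applying the hypothesis to $s'_{i+1}$, whose value is at most $b(s)-i-1<b(s)$.
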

\begin{proof}
Consider a rewriting sequence $s'_0,\ldots,s'_{b(s)}$ that applies a replacement with the right-hand side longer than the left-hand side from $s'_i$ to $s'_{i+1}$ and let $i$ be the smallest such value. Let $s$ be the string obtained from $s'_i$ applying the subset of replacements that exclude the ones with the right-hand side longer than the left-hand side. We have that $s$ is a subword of $s'_{i+1}$. By Theorem~\ref{temp0}, $b(s) \leq b(s'_{i+1})$, and the corollary follows.
\end{proof}

\begin{corollary}\label{temp2}
For every string $s$ of length at least $3$, there exists a shortest rewriting sequence from $s$ where every rewriting contains two extreme replacements.
\end{corollary}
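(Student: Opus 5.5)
Following Corollary~\ref{temp1}, start from a shortest rewriting sequence $s=s_0\Rightarrow s_1\Rightarrow\cdots\Rightarrow s_{b(s)}=\emptyset$ that uses no length-increasing replacement. Then transform it step by step, from the front, into one in which every rewriting applied to a string of length at least $3$ contains both extreme replacements. Throughout, each string has the form $\triangleleft w \triangleright$ with $w\in\{\UP,\DW\}^*$, and each rewriting preserves this form until the final step $\triangleleft\triangleright\rightarrow\emptyset$. So a string of length at least $3$ has a nonempty middle word $w$, and both extreme replacements $\triangleleft x\rightarrow\triangleleft$ and $x\triangleright\rightarrow\triangleright$ are available. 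If $|w|=1$, they overlap on the same symbol; I read the statement as requiring, in that case, that the rewriting contain an extreme replacement at all. I would treat this boundary case separately, since applying $\triangleleft x\to\triangleleft$ alone is clearly optimal there.

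Let $i$ be the first index at which the rewriting $s_i\Rightarrow s_{i+1}$ (with $|s_i|\ge 3$) omits an extreme replacement, say the left one. The first symbol $x$ of $w$ is then untouched, or it participates in an internal replacement with its right neighbour: either a merge $xx\rightarrow x$ or a swap $x\bar x\rightarrow \bar x x$. In every case, define a modified rewriting $s_i\Rightarrow s'_{i+1}$. Add $\triangleleft x\rightarrow\triangleleft$, and if $x$ was used in an internal replacement, drop that replacement. Its right neighbour is then left alone, and the left-hand sides remain disjoint.

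The key claim is that $s'_{i+1}$ is a subword of $s_{i+1}$. Compare symbol by symbol, as in the proof of Theorem~\ref{temp0}. In the untouched case, $s'_{i+1}$ is just $s_{i+1}$ with $x$ deleted. In the merge case $xx\to x$, $s_{i+1}$ contains one copy of $x$ where $s'_{i+1}$ keeps the second original $x$. In the swap case, $s_{i+1}$ has $\triangleleft\bar x x\cdots$ while $s'_{i+1}$ has $\triangleleft\bar x\cdots$, so deleting $x$ yields it. All other replacements are identical in both strings.

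By Theorem~\ref{temp0}, $b(s'_{i+1})\le b(s_{i+1})$, so replacing the tail of the sequence after $s_i$ by a shortest sequence from $s'_{i+1}$ (again chosen via Corollary~\ref{temp1}) gives a sequence that is still shortest. It now contains the left extreme replacement at step $i$. Do the same symmetrically on the right; the two modifications touch disjoint positions when $|w|\ge 2$.

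Termination is the main obstacle, because re-choosing the tail can destroy the property already achieved at later steps. I would handle it by induction on $b(s)$ rather than on $i$. Fix the first step to contain both extreme replacements as above, which keeps optimality. Then apply the induction hypothesis to $s_1$, using $b(s_1)=b(s)-1$. This is cleaner than repeated local surgery, and reduces everything to the single-step modification plus the subword comparison, whose case check (untouched, merge, swap on each side) is the only real work.
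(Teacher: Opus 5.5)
Your proof is correct and follows the paper's argument: the same case split on the first middle symbol (untouched, merge $xx\to x$, swap $x\bar{x}\to\bar{x}x$), the same replacement of that step by one using the left extreme replacement, and the same appeal to Theorem~\ref{temp0} on the resulting subword. You are more careful than the paper in three places: you invoke Corollary~\ref{temp1} so that the case split is exhaustive, you make the induction on $b(s)$ explicit, and you note that for $|w|=1$ the two extreme replacements overlap, so only one of them can be required there.
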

\begin{proof}
Consider a shortest rewriting sequence $s \Rightarrow s_1 \Rightarrow \cdots $ that does not start with a left removal $\triangleleft \UP \rightarrow \triangleleft$ or $\triangleleft \DW \rightarrow \triangleleft$.
Assume without loss of generality that the string $s$ starts with $\triangleleft \UP$, since the case of $\triangleleft \DW$ is analogous. The case of the extreme replacement involving $\triangleright$ is also analogous.
Consider the different rewritings from $s$ to $s_1$:

\begin{itemize}
\item If the rewriting is $\triangleleft \UP X \Rightarrow \triangleleft \UP Y$ for strings $X,Y$ where the leftmost $\UP$ is not involved in any replacement, then the length of the full rewriting sequence is $1 + b(\triangleleft \UP Y)$.
We now show how to obtain a rewriting of the same length that starts with an extreme replacement.
We rewrite $\triangleleft \UP X \Rightarrow \triangleleft Y$ providing a sequence of total length $1+b(\triangleleft Y)$. By Theorem~\ref{temp0}, we have $b(\triangleleft Y) \leq b(\triangleleft \UP Y)$.

\item If the rewriting is $\triangleleft \UP \UP X \Rightarrow \triangleleft \UP Y$ for strings $X,Y$ where the leftmost replacement is $\UP\UP \rightarrow \UP$, then we obtain the same string $\triangleleft \UP Y$ if we apply the replacement $\triangleleft\UP \rightarrow \UP$.

\item If the rewriting is $\triangleleft \UP \DW X \Rightarrow \triangleleft \DW \UP Y$ for strings $X,Y$ where the leftmost replacement is $\UP\DW \rightarrow \DW\UP$, then the length of the full rewriting sequence is $1 + b(\triangleleft \DW \UP Y)$.
We now show how to obtain a rewriting of the same length that starts with an extreme replacement.
We rewrite $\triangleleft \UP \DW X \Rightarrow \triangleleft \DW Y$ providing a sequence of total length $1+b(\triangleleft \DW Y)$. By Theorem~\ref{temp0}, we have $b(\triangleleft \UP Y) \leq b(\triangleleft \DW \UP Y)$.\qedhere
\end{itemize}
\end{proof}

We use a heuristic together with a depth first search exact solution to estimate $b(s)$. 
The heuristic is very involved and sometimes gives incorrect bounds. To fix some wrong bounds, we use an instance-independent preprocessing to compute tight values of $b(s)$ for small enough strings $s$ and store only the values where the heuristic incorrectly computes $b(s)$ in a file. This file is loaded by our solver and stored in a hash table.
}

\journal{
%....................................................................
\subsection{Improving a Solution} \label{s:improve}

Given a solution $P_1,\ldots,P_{|\TT|}$, we may improve it as follows. We choose a random path $P_i$ and use the SAT formulation to find a solution where the length of $P_i$ is decremented and all other lengths remain the same. We repeat this as often as necessary. Notice that the method may converge to a locally optimal minimal list of lengths that is not globally optimal.

If the number of clauses is too large, there are two different approaches that we can take (and they may be combined).
We may force the new solution to be close to the original one by only creating edge variables that cross few edges in the corresponding triangulation of the previous solution.

We may also trim the solution to a certain radius $r$, by only rebuilding the portion of the solution that is within $r$ steps from the center. In this case, it is helpful to use MaxSAT first, in order to reduce the number of unit-flips performed in the last steps as follows. Given a path $T_0,\ldots,T\ell$, we first use a MaxSAT solver to find the path of length $\ell$ that minimizes the number of unit flips performed from $T_{\ell-1}$ to $T_\ell$. The MaxSAT formulation is equal to the SAT formulation with soft clauses $\neg f(u,v,u_2,v_2,\ell)$ for each last step flip variable. We then find the path of length $\ell-1$ from $T_0$ to the $T_{\ell-1}$ of the previosu path that minimizes the number of unit flips performed from $T_{\ell-2}$ to $T_{\ell-1}$. We continue this way for $r$ steps.

%....................................................................
\subsection{Path Heuristic} \label{s:pathHeuristic}

The SAT formulation finds the shortest path connecting two triangulations reasonably fast, but it may be too slow for some usages. We also designed a heuristic that produces reasonably short paths quickly. We are given two triangulations $T_0,T'$ and the goal is to find a short path from $T_0$ to $T'$.

We use a greedy approach that iteratively obtains a triangulation $T_{i+1}$ from a triangulation $T_i$ as follows. Let $F$ denote the set of possible unit flips in $T_i$. More precisely, the elements of $F$ are pairs $e,e'$ of edges such that a unit flip from $T_i$ removes $e$ and inserts $e'$. We then build a graph $G(F)$ with vertex set $F$ and edges between two unit flips that share a triangle. Notice that the possible parallel flips correspond to independent sets in $G(F)$. We assign a weight to the vertices as follows. The weight of a vertex $e,e'$ is the number of edges in $T'$ crossed by $e'$ minus the number of edges in $T'$ crossed by $e$. Vertices of zero or negative weight are eliminated.

We then proceed to greedily find an independent set $I$ in $G(F)$. We iteratively add to $I$ the \emph{unmarked} vertex of maximum weight, breaking ties by minimum degree. We then \emph{mark} all the vertices in the closed neighborhood of $I$. We repeat until all vertices are marked.

This iterative approach is repeated until we reach $T'$, which will happen in $O(n^2)$ flips because of the following Theorem from~\cite{flipcross}.

\begin{theorem} \label{t:crossdec}
Let $T,T'$ be two triangulations of the same point set. If $T \neq T'$, then there exists a unit flip that replaces an edge $e \in T$ by an edge $e'$ such that $e$ crosses more edges of $T'$ than $e'$ does.
\end{theorem}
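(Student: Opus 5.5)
The idea is to look at an edge $e \in T \setminus T'$ that is extremal for crossings with $T'$ and to flip it or a neighbor. Set $c(e)$ to be the number of edges of $T'$ crossed by $e$. Since $T \neq T'$ and both are triangulations of $S$ (maximal planar straight-line graphs), some edge of $T$ is not in $T'$. Such an edge must cross at least one edge of $T'$, because otherwise maximality of $T'$ would force it into $T'$. So $c(e) \geq 1$ for every $e \in T \setminus T'$.

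The plan is to find an edge $e' \in T'$ and a flippable edge $e \in T$ crossing it, such that the flip of $e$ produces an edge crossing fewer edges of $T'$.

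\textbf{Choosing the pair.} Pick an edge $e' = ab \in T' \setminus T$. Walk along $ab$ from $a$: the segment leaves $a$ inside a triangle $abc$-wedge of $T$ and then crosses a sequence of edges $e_1,\ldots,e_k$ of $T$. Consider the first crossed edge $e_1 = xy$. The triangle of $T$ incident to $a$ containing the start of $ab$ is $axy$. Let $xyz$ be the triangle on the other side of $e_1$.

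\textbf{Flipping.} If $axzy$ is convex, flipping $e_1$ yields $az$. Compare crossings with $T'$. Any edge of $T'$ crossing $az$ lies inside the quadrilateral $axzy$, so it must cross $xy$ as well, unless it passes through $a$. Edges of $T'$ through $a$ cannot cross $az$ at an interior point except collinearly, which is excluded. Moreover $ab$ crosses $xy$ but not $az$, because $ab$ enters triangle $xyz$ through $xy$ and $az$ emanates from $a$. Hence $c(az) \le c(xy) - 1$. This is the segment-in-convex-quadrilateral containment argument, and it is routine.

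\textbf{Main obstacle: $axzy$ may be non-convex.} In that case the point $z$ lies so that one of $x,y$, say $y$, is a reflex vertex. Then $ab$ continues past $xyz$, and we continue the walk. The standard remedy, as in the known proof from~\cite{flipcross}, is to choose the pair extremally instead of taking the first crossing. I would try the following order.
\begin{enumerate}
\item Among all pairs $(e,e')$ with $e \in T$, $e' \in T'$ crossing, with $e'$ having an endpoint $a$ and $e$ being an edge of the triangle of $T$ at $a$ that $e'$ enters, choose one minimizing the distance from $a$ to the crossing point (or equivalently, take a lexicographically extremal configuration).
\item In the non-convex case, the reflex vertex $y$ together with $a$ spans a segment $ay$ inside the union of triangles. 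That segment crosses $T'$-edges that $ab$ forced, producing a closer crossing pair that contradicts the choice.
\end{enumerate}
If this extremal argument fails, I would fall back to citing Theorem~\ref{t:crossdec} directly from~\cite{flipcross}, which is how the paper presents it, and only verify the convex-case crossing count above.
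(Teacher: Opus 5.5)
Your proposal has a genuine gap: the convex-case count is false, and the whole argument rests on it. You claim that every edge of $T'$ crossing $az$ also crosses $xy$ unless it passes through $a$. This overlooks two kinds of $T'$-edges:
\begin{itemize}
\item edges of $T'$ incident to $x$ or $y$;
\item edges of $T'$ cutting off the corner $z$, i.e.\ crossing both $xz$ and $yz$.
\end{itemize}
Both kinds cross $az$ but not $xy$, so they are counted in $c(az)$ and not in $c(xy)$.

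Concretely, take $a=(1.9,0)$, $x=(2,1)$, $y=(2,-1)$, $z=(4,0)$, $b=(5,2)$, which are in convex position. Let $T$ be the fan from $x$ (diagonals $xy,xz$) and $T'$ the fan from $b$ (diagonals $ab,by$). Walking from $a$ along $ab\in T'\setminus T$, the first crossed edge is $xy$, and $axzy$ is convex. The crossing is at distance about $0.12$ from $a$. This is the smallest endpoint-to-first-crossing distance in this instance, so it is also the pair your step~1 selects. Yet $c(xy)=1$ (only $ab$) and $c(az)=1$ (only $by$, which crosses $az$ at $(3,0)$). So the flip $xy\to az$ does not decrease the count. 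The flip that works here is $xz\to by$, which takes the count from $2$ to $0$.

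So the decrease is not a local consequence of flipping the first crossed edge. A correct proof must choose which crossed edge to flip more carefully, and show that for that choice the offending $T'$-edges (those through the endpoints of the flipped edge, or cutting off the far apex) cannot occur. That is the nontrivial content of the result in \cite{flipcross}.

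Your non-convex case does not supply this either. The segment $ay$ you propose to use is a side of the triangle $axy$, hence already an edge of $T$, so it yields no new crossing pair and no contradiction. You also never show that some crossed edge has a convex quadrilateral at all.

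What is left is the fallback of citing \cite{flipcross}. That is exactly what the paper does: it imports this theorem without proof. It is acceptable as a citation, but you should drop the convex-case ``verification'' you planned to keep, since as stated it is incorrect.
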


We further improve the heuristic using the squeaky wheel paradigm~\cite{squeaky}. Initially, we assign weight $1$ to every edge. We modify the definition of the weight of a flip as follows. The weight of a flip $e,e'$ is the sum of the weights of the edges in $T'$ crossed by $e'$ minus the sum of the weights of the edges in $T'$ crossed by $e$. When we reach triangulation $T_{i+1} = T'$ from a triangulation $T_i$ we increment the weight of all edges in $T'$ that are not in $T_{i+1}$. We repeat the greedy algorithm with the new weights, keeping the best solution found. This procedure is repeated multiple times.
}

\journal{
%....................................................................
\subsection{Solution Heuristic} ~\label{s:solutionHeuristic}

We use the following strategy to obtain reasonably good initial solutions that we may use as a basis to improve with the aforementioned methods. We start by adding the Delaunay triangulation as a candidate center. We then build other candidate centers by performing flips starting from the Delaunay triangulation.
Given an edge $e$ and a triangulation $T$, let $\chi(e,T)$ denote the number of segments of $T$ crossed by $e$.
We repeatedly perform unit flips that remove an edge $e$ and add an edge $e'$ maximizing
\[\sum_{T \in \TT} \chi(e)^p - \chi(e')^p,\]
as long as the value of the sum is positive. We add the triangulation we obtained to the set of candidate center and repeat the process for a different value of $p$.

We calculate the paths from the candidate centers to each solution, building a solution pool. For the next step it will be useful to have a small number of unit-flips in the last flip of the path from the input triangulation to the center. To do that, we either use the greedy heuristic or a MaxSAT formulation.

To improve the solution pool, we pick a solution from the pool and look at the triangulations that are one flip away from the center in each path. For each such triangulation, we compute the distance to the input triangulations and add the solution to the pool as before.
}

%--------------------------------------------------------------------
\section{Results} \label{s:results}
%--------------------------------------------------------------------

In this section, we present the computational results that we obtained with our implementation of the aforementioned algorithm. In Section~\ref{s:pathresults}, we present the results on computing short paths between two given triangulations. In Section~\ref{s:exactresults}, we present our exact solver, with and without the happy edges conjecture. 
\journal{In Section~\ref{s:heuristicresults}, we present the heuristics used to find solutions to the instances that we could not solve exactly.}

The solvers were coded in C++ and compiled with GCC and run a single thread. During the competition, they were executed on several Linux computers, either using GNU Parallel~\cite{parallel} for local executions or Slurm~\cite{slurm} for cluster executions. It was very useful to have access to machines with 128GB or more RAM to  solve large SAT formulations with \texttt{CaDiCal}~\cite{cadical}, which has been able to solve SAT instances with more than $50$ million variables and $500$ million clauses. The time measurements on this paper use an AMD Ryzen 9 9900X CPU and ASUS TUF B650M motherboard with 128GB of RAM running Fedora Core 43.

%....................................................................
\subsection{Path Calculation} \label{s:pathresults}

\conf{
\begin{table}[tb]
\centering
\begin{tabular}{c|cc|cc|cc}
 & \multicolumn{2}{c|}{Heuristic} & \multicolumn{2}{c|}{SAT happy} & \multicolumn{2}{c}{SAT exact}\\
$n$ & length & total time & length & time & length & time\\
\hline
500 & \textbf{12} & 70.9 & \textbf{12} & 1512 & \textbf{12} & 10191 \\
1000 & 12 & 77.6 & \textbf{11} & 6806 & \textbf{11} & 78264 \\
1500 & 12 & 100 & \textbf{11} & 19302 & \textbf{11} & 201353 \\
2000 & \textbf{13} & 208 &  \textbf{13} & 16937 & \textbf{13} & 485177 \\
3000 & \textbf{14} & 695 & \textbf{14} & 140729 & $\cdot$ & $\cdot$ \\
4000 & \textbf{15} & 1340 & \textbf{15} & 101402 & $\cdot$ & $\cdot$ \\
5000 & \textbf{16} & 656 & \textbf{16} & 1037840 & $\cdot$ & $\cdot$  \\
6000 & \textbf{16} & 1891 &  \textbf{16} & 271081 & $\cdot$ & $\cdot$  \\
7000 & 17 & 777 & \textbf{16} &  471389 & $\cdot$ & $\cdot$  \\
\end{tabular}
\caption{Length and computation time (in milliseconds) for a path from the Delaunay triangulation to the first input triangulation of the \texttt{rirs-}$n$-\texttt{-20} instance for different values of the number of points $n$. The Heuristic column shows the smallest length obtained by applying four heuristics (greedy (forward/backward) and squeaky-wheel (forward/backward)) described in the full version of the paper and the total time to run the four heuristics. The SAT columns correspond to SAT formulation with or without the happy edges conjecture, unless it takes too long.}
\label{tab:distance}
\end{table}
}

\journal{
\begin{table}[tb]
\centering
\begin{tabular}{c|cc|cc|cc|cc|cc|cc}
 & \multicolumn{2}{c|}{Greedy} & \multicolumn{2}{c|}{Greedy} & \multicolumn{2}{c|}{Squeaky} & \multicolumn{2}{c|}{Squeaky} & \multicolumn{2}{c|}{SAT} & \multicolumn{2}{c}{SAT}\\
 & \multicolumn{2}{c|}{forward} & \multicolumn{2}{c|}{backward} & \multicolumn{2}{c|}{forward} & \multicolumn{2}{c|}{backward} & \multicolumn{2}{c|}{happy} & \multicolumn{2}{c}{exact}\\
$n$ & $\ell$ & t & $\ell$ & t & $\ell$ & t & $\ell$ & t & $\ell$ & t & $\ell$ & t\\
\hline
500 & 13 & 4.2 & \textbf{12} & 3.7 & 13 & 45 & \textbf{12} & 18 & \textbf{12} & 1512 & \textbf{12} & 10191 \\
1000 & 12 & 7.8 & 12 & 6.8 & 12 & 25 & 12 & 38 & \textbf{11} & 6806 & \textbf{11} & 78264 \\
1500 & 13 & 13 & 12 & 12 & 12 & 52 & 12 & 23 & \textbf{11} & 19302 & \textbf{11} & 201353 \\
2000 & 15 & 21 & \textbf{13} & 18 & \textbf{13} & 65 & \textbf{13} & 104 & \textbf{13} & 16937 & \textbf{13} & 485177 \\
3000 & 15 & 36 & 15 & 33 & \textbf{14} & 193 & 15 & 433 & \textbf{14} & 140729 & $\cdot$ & $\cdot$ \\
4000 & 18 & 57 & 16 & 51 & 16 & 577 & \textbf{15} & 655 & \textbf{15} & 101402 & $\cdot$ & $\cdot$ \\
5000 & 18 & 70 & 17 & 70 & \textbf{16} & 238 & 17 & 278 & \textbf{16} & 1037840 & $\cdot$ & $\cdot$  \\
6000 & 17 & 93 & \textbf{16} & 83 & 17 & 1071 & \textbf{16} & 644 & \textbf{16} & 271081 & $\cdot$ & $\cdot$ \\
7000 & 17 & 94 & 17 & 96 & 17 & 180 & 17 & 407 & \textbf{16} & 471389 & $\cdot$ & $\cdot$  \\
\end{tabular}
\caption{Length and computation time (in milliseconds) from the Delaunay triangulation to the first input triangulation of the \texttt{rirs-}$n$-\texttt{-20} instance for different values of the number of points $n$. The columns respectively correspond to the greedy heuristic forward and backward, the squeaky wheel heuristic forward and backward, the SAT solution with the happy edges conjecture, and the SAT solution without the happy edges conjecture, unless it takes too long. \journal{The best lengths found are shown in bold.}}
\label{tab:distance}
\end{table}
}

Computing short paths between two given triangulations is a key component to obtain good solutions. Typically, these paths are computed with an input triangulation as one extreme, and a triangulation that makes a reasonably good center as the other extreme. 
\journal{In this section, we use the Delaunay triangulation as one extreme, because it is a well defined triangulation that makes a reasonably good (but not very good) center. Table~\ref{tab:distance} shows the length of the path and the running time of different heuristics and SAT solutions with and without the happy edges conjecture. The paths are calculated from the Delaunay triangulation to the first input triangulation of several instances. The SAT paths are obtained by first running the squeaky wheel heuristic in both directions, and then iteratively decreasing the path length. Notice that the running time of the heuristics is much smaller, while the result is rarely more than $1$ unit away from the optimal distance, especially if we take the minimum of both directions. We run the squeaky wheel heuristic for at most $16$ iterations, but stop earlier if the length increases from one iteration to the next.}
\conf{Table~\ref{tab:distance} shows the best length of the path obtained with our heuristics (described in the full version of the paper) with the associated running times compared to our SAT formulation with and without the happy edges conjecture. The paths are computed from the Delaunay triangulation that makes a reasonably good (but not very good) center to the first input triangulation of several instances. 
The SAT paths are obtained by first running a heuristic in both directions, and then iteratively decreasing the path length with our SAT solution.}

%....................................................................
\subsection{Exact Solutions} \label{s:exactresults}

\begin{figure}[ht]
 \centering
 \includegraphics[height=5.3cm]{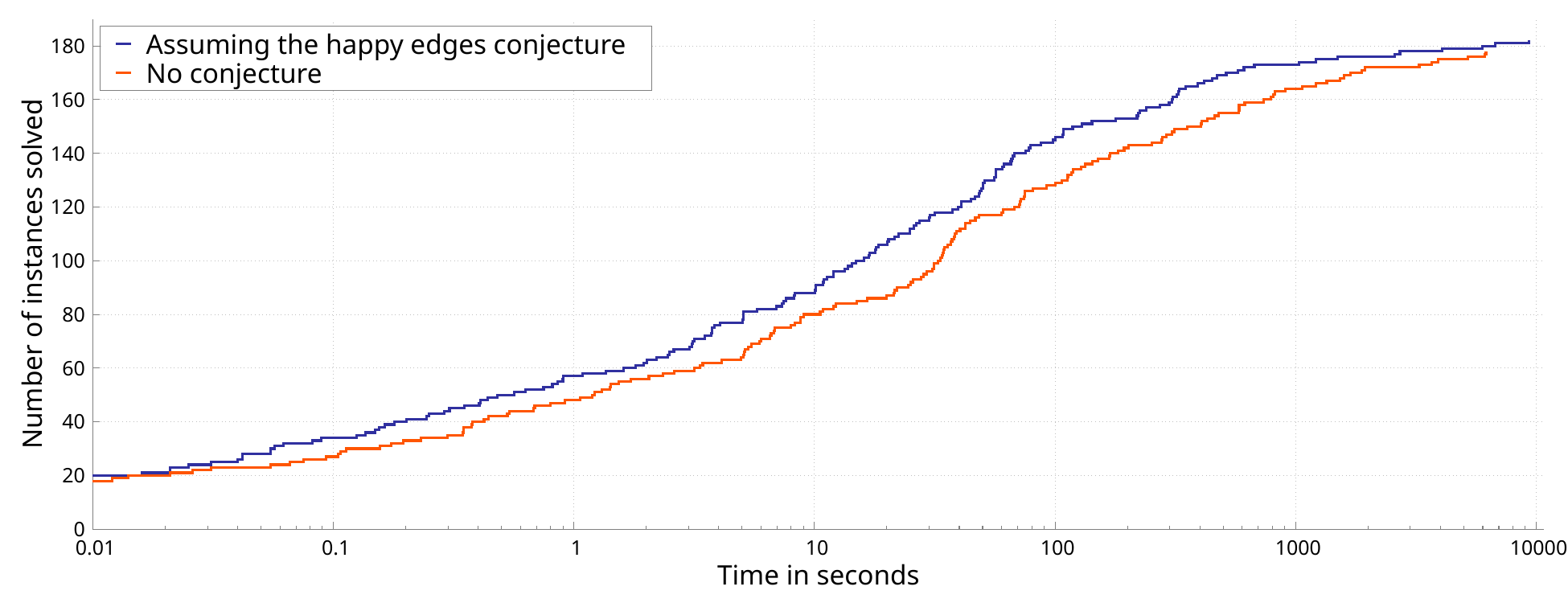}
 \caption{Number of exact solutions found as a function of the running time over 3 hours of execution with and without assuming the happy edges conjecture.}
 \label{f:exact}
\end{figure}

\conf{We end the paper with the performance of our exact solver.}
\journal{We now present the exact solver that we used to solve most instances exactly.}
Figure~\ref{f:exact} plots the number of exact solutions found as a function of runtime, with and without the happy edges conjecture. Since the solution values are identical in both cases, the happy edge conjecture holds.
\journal{Notice that the impact of the happy edges conjecture is less significant than when computing only paths, as there are few common edges on all input triangulations. We remark that during the competition, we managed to find exact solutions to $189$ of the $201$ \texttt{small} instances without assuming the happy edges conjecture.}

\journal{
%....................................................................
\subsection{Crossing Lower Bound} \label{s:corsslbresults}

The heuristic that estimates $b(s)$ for a string $s$ (see Section~\ref{s:crossLB}) is surprisingly accurate.  Figure~\ref{f:crosslbplot} shows the heuristic error rates for different string lengths. The smallest strings where the heuristic fails have length $12$ and are:\\
$\triangleleft ? \UP \DW \DW \UP \DW \DW \DW \DW ? \triangleright$, 
$\triangleleft ? \DW \UP \UP \DW \UP \UP \UP \UP ? \triangleright$,
$\triangleleft ? \DW \DW \DW \DW \UP \DW \DW \UP ? \triangleright$, and
$\triangleleft ? \UP \UP \UP \UP \DW \UP \UP \DW ? \triangleright$,\\
with the question marks replaced by $\UP$ or $\DW$, which gives an error rate of $4 \cdot 4 / 2^{10} \approx 0.016$ as the two extreme symbols are fixed.
We compare the error rate when the heuristic uses exact preprocessed values to solve subproblems and when the heuristic does not use any preprocessed data. The heuristic was never wrong by more than one unit on strings of up to $28$ symbols.

\begin{figure}[htb]
 \centering
 \includegraphics[height=5.3cm]{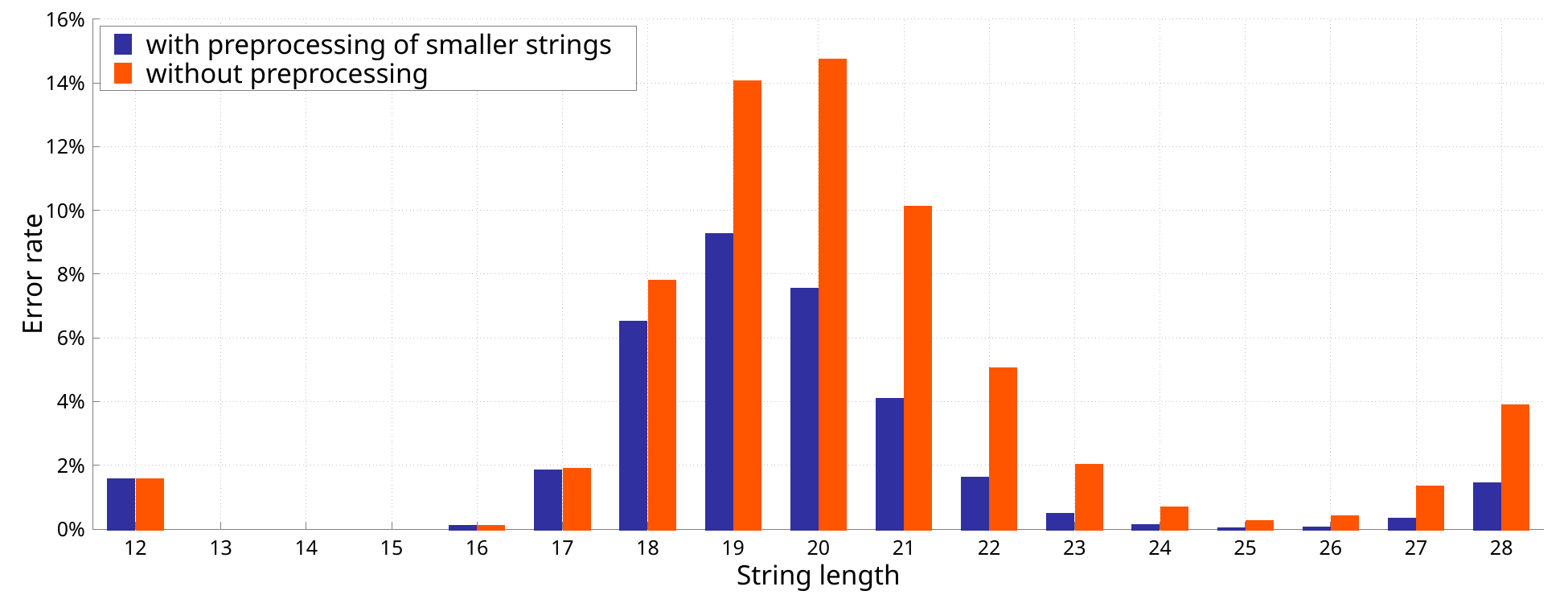}
 \caption{Error rate of the crossing lower bound for strings with $k$ symbols, which corresponds to $k-1$ crossings.}
 \label{f:crosslbplot}
\end{figure}

%....................................................................
\subsection{Heuristic Solutions} \label{s:heuristicresults}

The typical process to solve instances that we did not solve exactly consists of several steps. First, we use the heuristic from Section~\ref{s:solutionHeuristic} to obtain a reasonably good center. Notice that no SAT solver is used in this part. The evolution of the objective value for this step is shown in Figure~\ref{f:solveplot}. The name of the \texttt{rirs} instances is composed of two values. The first is the number of points and the second is the number of input triangulations.

\begin{figure}[ht]
 \centering
 \hfill
 \includegraphics[height=5.3cm]{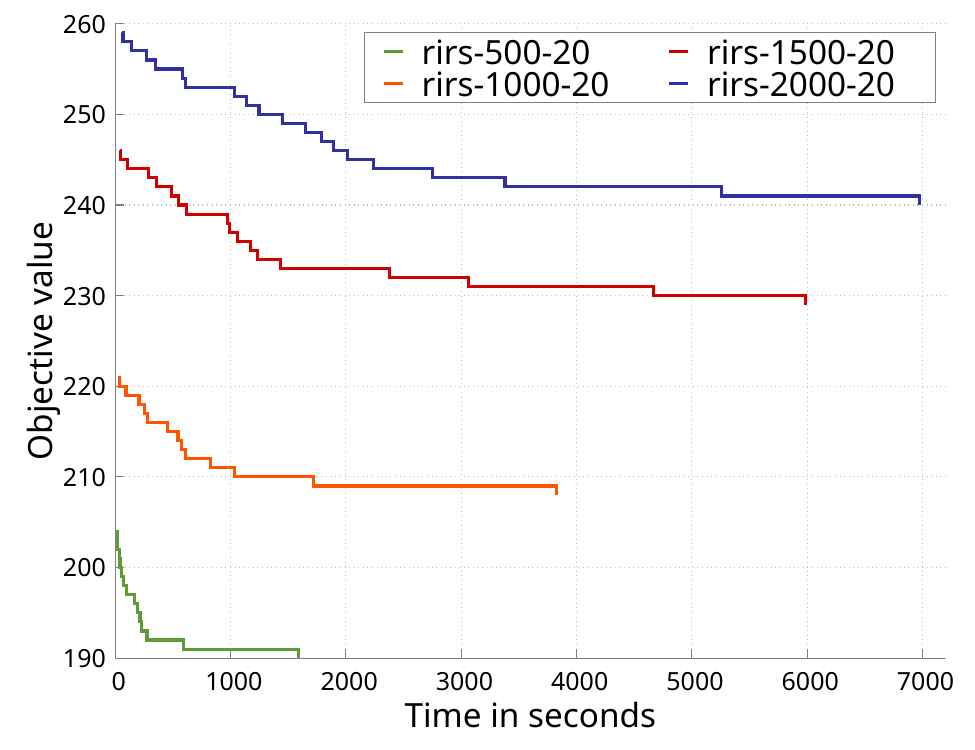}
 \hfill
 \includegraphics[height=5.3cm]{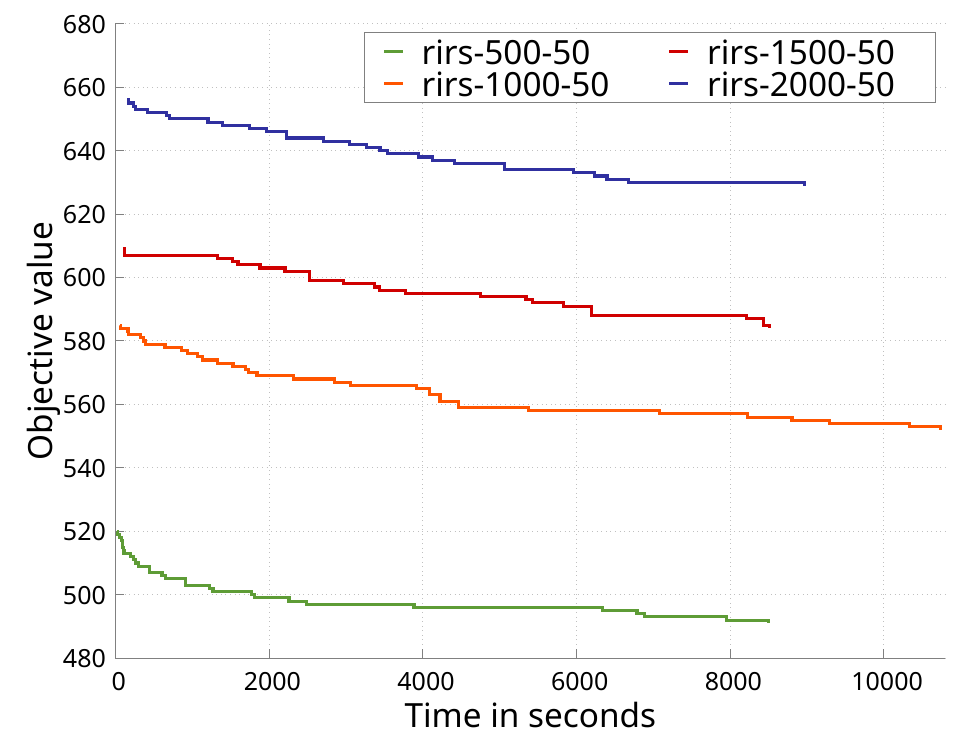}
 \hfill{}
 \caption{Evolution of the best solution found by the heuristic solver without any SAT solver for different instances.}
 \label{f:solveplot}
\end{figure}

Second, we build a new solution that keeps the same center but we recalculate the paths using the SAT formulation from \ref{s:satDistance}, assuming the happy edges conjecture (Section~\ref{s:happy}) and inexact lower bounds (Section ~\ref{s:crossLB}). This will typically reduce the objective value by $1$ to $4$ percent. The calculation of $50$ paths in each solution is shown in Figure~\ref{f:recalcplot}. Notice that the running time increases rapidly with the number of points and that finding a shorter path (satisfiable SAT problem) is typically slower than when no shorter path exists (unsatisfiable SAT problems).

\begin{figure}[ht]
 \centering
 \includegraphics[height=5.3cm]{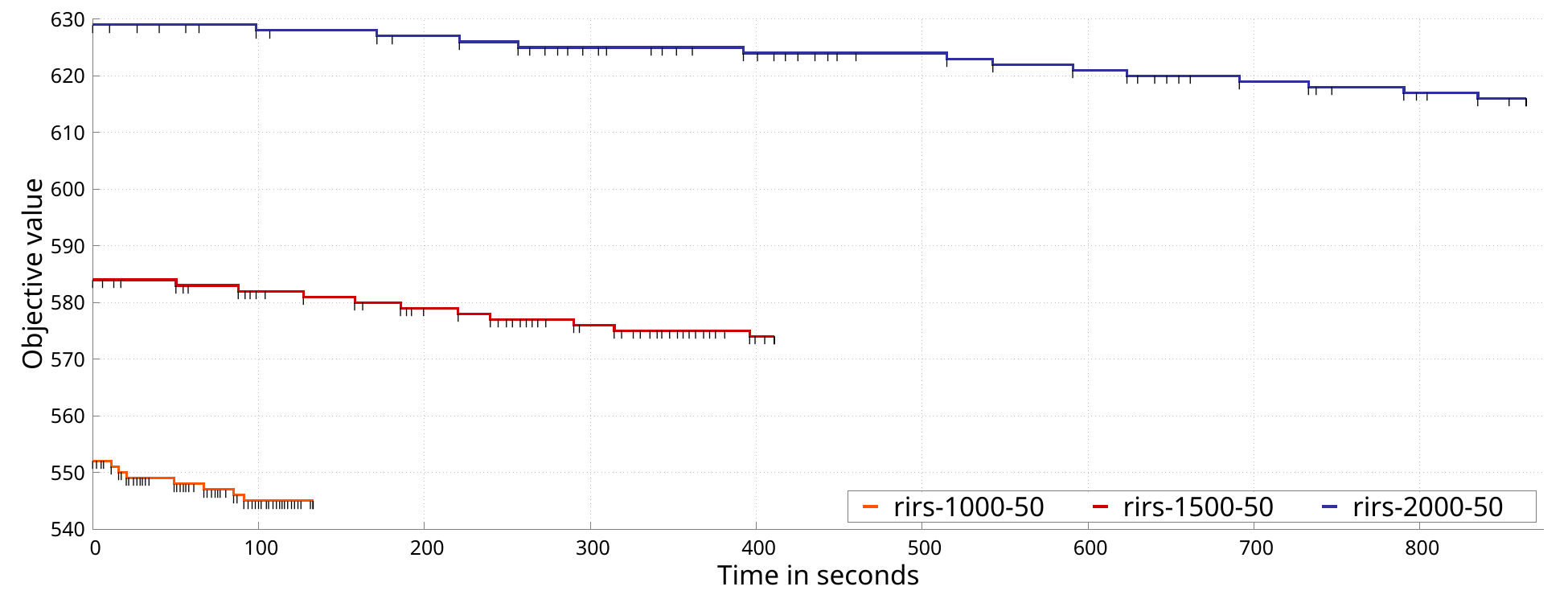}
 \caption{Path by path improvement of the heuristic solutions using a SAT solver, where each path is recalculated but the center is kept unchanged. Each black vertical bar represents the beginning of the computation of a path.}
 \label{f:recalcplot}
\end{figure}

Third, we improve the solution using the SAT formulation from Section~\ref{s:satSolution}, assuming the happy edges conjecture (Section~\ref{s:happy}) and inexact lower bounds (Section ~\ref{s:crossLB}). This part requires adjusting a large number of parameters in order to obtain SAT problems that are not too hard. The parameters include the distance to the previous center, the distance to the previous path, and whether the solution will be trimmed. Optionally but recommended when the problem is trimmed, we use MaxSAT to reduce the number of unit flips close to the center. The improvement of some solutions over time is shown in Figure~\ref{f:improveplot}. Notice that there is a significant preprocessing time to calculate edge variables with a limited number of crossings. This preprocessing is applied initially and after every improvement. Also notice that unsatisfiable SAT problems, which mean no improvement in the solution, are solved much faster than satisfiable ones. 

\begin{figure}[htb]
 \centering
 \includegraphics[height=5.3cm]{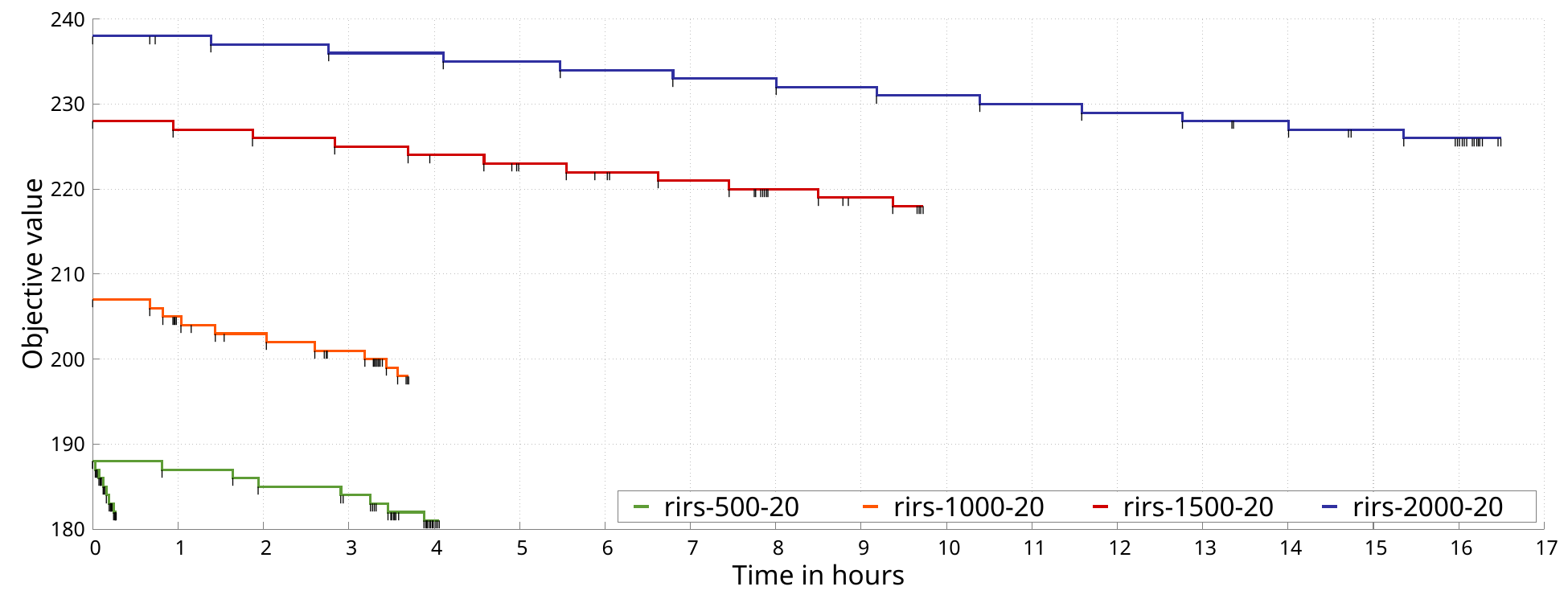}
 \includegraphics[height=5.3cm]{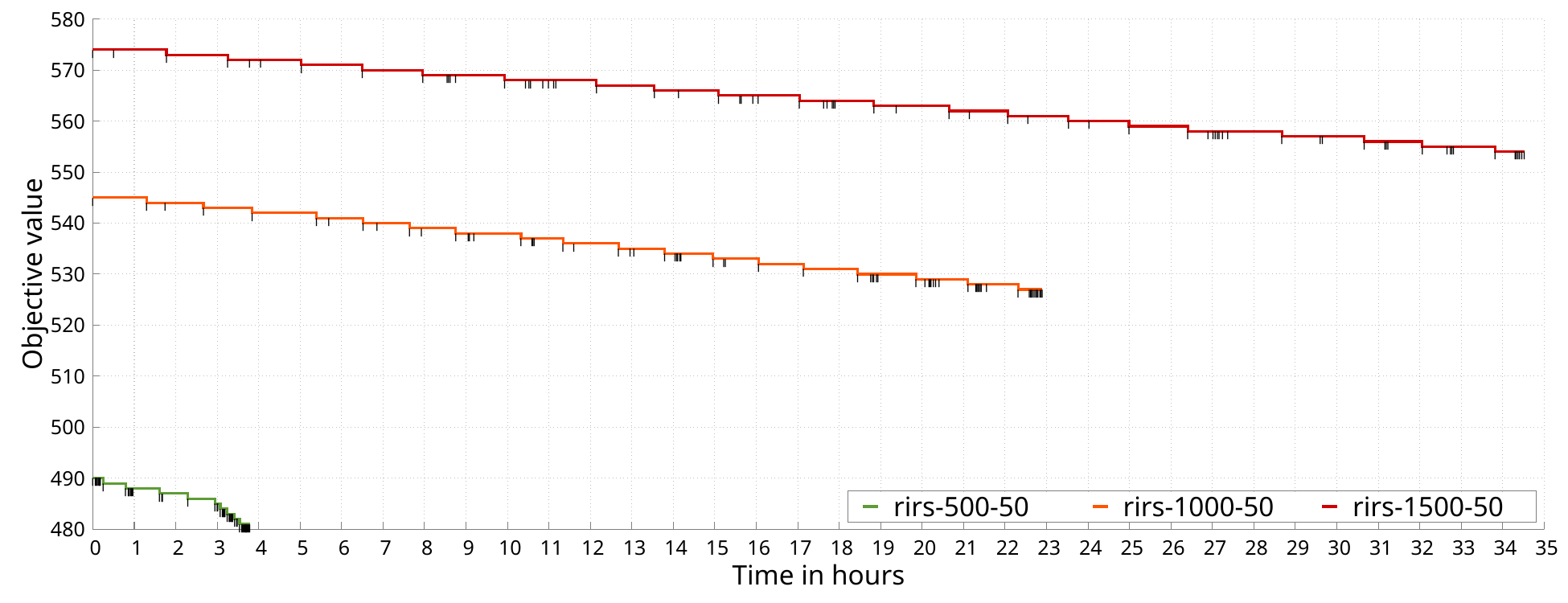}
 \caption{Improvement of the whole solution using a SAT solver and reducing the length of a random path by one unit at a time. We constrain the edges in the new solution to cross at most $3$ edges of the corresponding triangulation in the solution (except for the instance \texttt{rirs-500-20}, where the slower but more effective execution with the parameter set to at most $7$ edges is also pictured) but do not use trimming. Each black vertical bar represents a new path length that we try to decrement.}
 \label{f:improveplot}
\end{figure}
}

\journal{
%--------------------------------------------------------------------
\section{Conclusion and Open Problems} \label{s:conclusion}
%--------------------------------------------------------------------

We were surprised that we managed to solve so many instances exactly and how well an heuristic approximates the exact distance. We believe the following factors help explain the strong practical performance:

\begin{itemize}
    \item The short path length that allows a somewhat small number of variables.
    \item A SAT model where most clauses have size $2$.
    \item The ability to eliminate many variables through several arguments.
    \item The fact that the happy edges conjecture is either true or at least holds in most practical cases.
\end{itemize}

The short path lengths are somewhat surprising, given that an $\Omega(n)$ lower bound is presented in~\cite{pflips}, in contrast to the $\Theta(\log n)$ bound for combinatorial triangulations~\cite{pflipscomb}, where we are allowed to flip non-convex quadrilaterals. In the challenge instances, we observed path lengths that are roughly $O(\log n)$.

Still, there are \texttt{random} instances with only $160$ points and $20$ input triangulations and \texttt{woc} instances with only $185$ points and $6$ triangulations that we could not solve exactly. Also, we still managed to improve solutions to instances with as few as $320$ points and $20$ input triangulations months after the beginning of the challenge.

The challenge instances did not have many points in convex position. Surprisingly, the case where all points are in convex position is the hardest for our SAT formulation, as there are $\Theta(n^4)$ empty convex quadrilaterals. We wonder if a different model works better when all and also when most points are in convex position.

We believe that the same problem with unit flips is significantly harder because the long path lengths make the SAT formulation much more complex and removing a happy edge can reduce a path length from $\Theta(n^2)$ to $\Theta(n)$.

Many theoretical open problems remain, such as proving Conjecture~\ref{c:happy}. We also wonder if parallel flip distance problem is NP-hard in general and in convex position, as well as the unit flip distance problem in the convex case (the problem is NP-hard for general point sets~\cite{Pil14}). We also do not know if the problem of calculating $b(s)$ (see Section~\ref{s:crossLB}) can be solved in polynomial time, possibly using dynamic programming.
}

\bibliography{references}

\end{document}